\spnewtheorem{defi}{Definition}{\bfseries}{\rmfamily}
\spnewtheorem{redrule}{Reduction Rule}{\bfseries}{\rmfamily}
\spnewtheorem{obs}{Observation}{\itshape}{\rmfamily}
\spnewtheorem{cl}{Claim}{\bfseries}{\rmfamily}
\spnewtheorem{coro}{Corollary}{\bfseries}{\rmfamily}
\spnewtheorem{function}{Function}{\bfseries}{\rmfamily}
\newcommand{\LINEIF}[2]{%
    \STATE\algorithmicif\ {#1}\ \algorithmicthen\ {#2} \algorithmicend\ \algorithmicif%
}
\begin{document}

\pagestyle{headings}  

\title{Arbitrary Overlap Constraints in Graph Packing Problems}

\author{Alejandro L\'opez-Ortiz and Jazm\'in Romero}
\institute{David R.\ Cheriton School of Computer Science, University of Waterloo, Canada.}
\maketitle

\begin{abstract}
In earlier versions of the community discovering problem, the overlap between communities was restricted by a simple count upper-bound \cite{Mishra07,Cui13,Gossen14,ACMTOCT:FLOR}. In this paper, we introduce the $\Pi$-Packing with $\alpha()$-Overlap problem to allow for more complex constraints in the overlap region than those previously studied. 
Let $\mathcal{V}^r$ be all possible subsets of vertices of $V(G)$ each of size at most $r$, and $\alpha: \mathcal{V}^r \times \mathcal{V}^r \to \{0,1\}$ be a function.
The $\Pi$-Packing with $\alpha()$-Overlap problem seeks at least $k$ induced subgraphs in a graph $G$ subject to:
(i) each subgraph has at most $r$ vertices and obeys a property $\Pi$,  
and (ii) for any pair $H_i,H_j$, with $i\neq j$, $\alpha(H_i, H_j) = 0$ (i.e., $H_i,H_j$ do not \emph{conflict}).
We also consider a variant that arises in clustering applications:
each subgraph of a solution must contain a set of vertices from a given collection of sets $\mathcal{C}$, and no pair of subgraphs may share vertices from the sets of $\mathcal{C}$. In addition, we propose similar formulations for packing hypergraphs. 
We give an $O(r^{rk} k^{(r+1)k} n^{cr})$ algorithm for our problems where $k$ is the parameter and $c$ and $r$ are constants, provided that:
i) $\Pi$ is computable in polynomial time in $n$ and
ii) the function $\alpha()$ satisfies specific conditions. Specifically, $\alpha()$ is hereditary, applicable only to overlapping subgraphs, and computable in polynomial time in $n$. 
Motivated by practical applications we give several examples of $\alpha()$ functions which meet those conditions.

\end{abstract}

\section{Introduction}
Many complex systems arising in the real world can be represented by networks, e.g. social and biological networks.
In these networks, a node represents an entity, and an edge represents a relationship between two entities.
A \emph{community} arises in a network when two or more entities have common interests.
In this way, members of a community tend to share several properties. 
Extracting the communities in a network is known as the \emph{community discovering problem} \cite{Fortunato10}.

In practice  communities may overlap by sharing one or more of their members \cite{Fortunato10,Coscia11,Xie13}. 
In \cite{BstAlgorithm14,ACMTOCT:FLOR}, the $\mathcal{H}$-Packing with $t$-Overlap was proposed as an abstraction for the community discovering problem. 
The goal is to find $k$ subgraphs in a given graph $G$ (the network) where each subgraph (a community) should be isomorphic to a graph $H \in \mathcal{H}$  where $\mathcal{H}$ is a family of graphs (the community models).
Every pair of subgraphs in the solution should not overlap by more than $t$ vertices (shared members).

However, in some cases the type of overlap that is allowed may be more complex.
For example, it has been observed in \cite{Yang14} that overlapping regions are denser than the rest of the community. 
Also, in \cite{Gossen14} it is suggested that overlapping regions should contain nodes which have a relationship with all the communities they belong to. 
Moreover, in \cite{Moustafa09} only \emph{boundaries nodes} can happen in the overlapping regions.
Motivated by this, we generalize the $\mathcal{H}$-Packing with $t$-Overlap to restrict the pairwise overlap by a function $\alpha()$ rather than by an upper-bound $t$. 
We also consider other communities models besides a family $\mathcal{H}$. 
The scope of community definitions is vast, see \cite{Fortunato10}. 
Thus, we define the much more general problem of \emph{$\Pi$-Packing with $\alpha()$-Overlap}. 

\begin{center}
\fbox{
\parbox{11.3cm}{
\textbf{The $\Pi$-Packing with $\alpha()$-Overlap problem}
    
    \noindent \emph{Input}: A graph $G$ and a non-negative integer $k$.
		
		\noindent \emph{Parameter}: $k$

    \noindent \emph{Question}: Does $G$ contain a $(k,\alpha)$-$\Pi$-packing, i.e., a set of at least $k$ induced subgraphs $\mathcal{K}=\{H_1, \dots ,H_k\}$ subject to the following conditions: i. each $H_i$ has at most $r$ vertices and obeys the property $\Pi$, and ii. for any pair $H_i,H_j$, with $i\neq j$, $\alpha(H_i,H_j)=0$?
}}
\end{center}

We also propose a similar generalization for the problem of packing sets with pairwise overlap that we call the \emph{$r$-Set Packing with $\alpha()$-Overlap problem}. Let $\mathcal{U}^r$ be all possible subsets of elements of $\mathcal{U}$ each of size at most $r$, and $\alpha: {\mathcal{U}^r} \times{\mathcal{U}^r} \to \{0,1\}$ be a function.

\begin{center}
\fbox{
\parbox{11.3cm}{
\textbf{The $r$-Set Packing with $\alpha()$-Overlap problem}
    
    \noindent \emph{Input}: A collection $\mathcal{S}$ each of size at most $r$, drawn from a universe $\mathcal{U}$, and a non-negative integer $k$.

    \noindent \emph{Parameter}: $k$

    \noindent \emph{Question}:  Does $\mathcal{S}$ contain a \emph{$(k,\alpha())$-set packing}, i.e., at least $k$ sets $\mathcal{K}=\{S_1,\dots , S_k\}$ where for each pair $S_i,S_j$ ($i \neq j$) $\alpha(S_i,S_j)=0$?}}
\end{center}

Some of our generalized problems are NP-complete; this follows from the NP-complete $H$-Packing and $r$-Set Packing problems. 
Our goal is to achieve \emph{fixed-parameter} (or \emph{FPT}) algorithms which are algorithms that provide a solution in $f(k)\, n^{O(1)}$ running time, where $f$ is some arbitrary computable function depending only on the parameter $k$. 
In all our problems, $k$ (the size of the solution) is the parameter, $r$ is a fixed constant, and $n$ denotes the order of the graph or the number of elements in the universe (depending on the problem).

\textbf{Related Work.}
H. Fernau et al., \cite{ACMTOCT:FLOR} provide an $O(r^r\, k^{r-t-1})$ kernel for the $\mathcal{H}$-Packing and $r$-Set Packing with $t$-Overlap problems. In addition, an $O(r^{rk}\,k^{(r-t-1)\,k+2}\,n^r)$ algorithm for these problems can be found in \cite{BstAlgorithm14}. A $2\,(rk-r)$ kernel when $\mathcal{H}= \{K_r\}$ and $t=r-2$ is given in \cite{BstAlgorithm14}.

The $H$-Packing problem has an $O(k^{r-1})$ kernel, where $H$ is an arbitrary graph on $r$ vertices.
Kernelization algorithms when $H$ is a prescribed graph can be found in \cite{Fellows04a,Fernau09,JanMar2014,Prieto06}. The $r$-Set Packing problem has an $O(r^r k^{r-1})$ kernel \cite{Faisal10}.

The community discovering problem is studied with a variety of approaches in \cite{Mishra07,Gregory07,Moustafa09,Arora12,Coscia11,Celebi15}, and comprehensive surveys are \cite{Fortunato10,Xie13}. 

\textbf{Our Results.} 
In this work, we introduce the $r$-Set Packing and $\Pi$-Set Packing with $\alpha()$-Overlap problems as more universal versions for the problem of packing graphs and sets subject to overlap constraints modeled by a function $\alpha()$. 
Our generalizations capture a much broader range of potential real life applications. 

We show in Section \ref{GenericSection} that the $r$-Set Packing with $\alpha()$ Overlap problem is fixed-parameter tractable when $\alpha()$ meets specific requirements ($\alpha()$ is \emph{well-conditioned}, see Definition \ref{alphaCondition}).
Our FPT-algorithm generalizes our previous algorithm~\cite{BstAlgorithm14}. Previously, we considered only a specific type a conflict between a pair of sets: overlap larger than $t$. In our extended algorithm, we will consider the more general \emph{$\alpha$-conflicts}.
To solve the $\Pi$-Packing with $\alpha()$-Overlap problem, we reduce it to its set version. 
This allows us to achieve an algorithm with $O(r^{rk}\, k^{(r+1)\,k}\, n^{cr})$ running time, provided that $\alpha()$ is well-conditioned and $\Pi$ is verifiable in polynomial time.


In Section \ref{AlphaConditionSection}, we give specific examples of well-conditioned $\alpha()$ functions, some motivated by practical applications while others by theoretical considerations. 
Specifically, a well-conditioned $\alpha()$ can restrict (but it is not limited to): 
i) the size of the overlap,
ii) the weight in the overlap region, (assuming as input a weighted graph), 
iii) \emph{the pattern} in the overlap region, i.e. the induced subgraph in the overlap should be isomorphic to a graph in $\mathcal{F}$, where $\mathcal{F}$ is a graph class that is hereditary, 
iv) that all overlapping vertices must satisfy a specific property $\xi$, 
v) that the overlap region should have a specific density, and
finally, v) the maximum distance between any pair of vertices in the overlap.

Lastly, we study the PCH-$r$-Set Packing with $\alpha()$-Overlap problem in Section \ref{PCHSection}.
In this setting, every set in the solution must contain a specific set of elements from a given collection of sets $\mathcal{C}$. 
This problem remain fixed-parameter tractable if $|\mathcal{C}|=O(g(k))$ for some computable function $g$ dependent on $k$ and independent of $n$.

\section{Preliminaries}
Let $\mathcal{U}=\{u_1,\dots,u_n\}$ be a universe of elements and $\mathcal{S}=\{S_1,\dots,S_m\}$ be a collection of sets, where $S_i \subset \mathcal{U}$. We will use the letters $u$, $s$, $S$ in combination with subindices to refer to elements in $\mathcal{U}$, sets of elements of $\mathcal{U}$, and members of $\mathcal{S}$, respectively. Notice that we will identify a subset of elements of $\mathcal{U}$ (that is not necessarily a member of $\mathcal{S}$) using a lower-case $s$ with a subindex, while we restrict the use of upper-case letters to identify members of $\mathcal{S}$.

For $\mathcal{S'} \subseteq \mathcal{S}$, $val(\mathcal{S'})$ denotes the union of all members of $\mathcal{S'}$. 
We say that a subset of elements $s$ is \emph{contained} in a set $S$, if $s \subseteq S$.
In addition, let $\mathcal{S}(s)$ be the collection of all sets in $\mathcal{S}$ that contain $s$. 
That is, $s \subseteq S$ for each $S \in \mathcal{S}(s)$ and $s \not\subset S'$ for each $S' \in (\mathcal{S} \backslash \mathcal{S}(s))$. 
For any two sets $S,S'\in \mathcal{S}$, $|S \cap S'|$ is the \emph{overlap size} while $\{S \cap S'\}$ is the \emph{overlap region}. 

\begin{defi}\label{alphaCondition}
Let $\mathcal{U}^r$ be all possible subsets of elements of $\mathcal{U}$ each of size at most $r$, and $\alpha: {\mathcal{U}^r} \times{\mathcal{U}^r} \to \{0,1\}$ be a function. A pair of sets $s_i,s_j\in {\mathcal{U}^r}$ \emph{$\alpha$-conflict} if $\alpha(s_i,s_j)=1$ else they \emph{do not $\alpha$-conflict}. If $\alpha()$ satisfies the following requirements, we say $\alpha()$ is \emph{well-conditioned}.

\begin{itemize}
     \item [i)] $\alpha()$ is \emph{hereditary}. Specifically, if $s_i$ and $s_j$ do not $\alpha$-conflict ($\alpha({s_i},{s_j})=0$), $\alpha({s'_i},{s'_j})=0$ for any pair of subsets ${s'_i} \subseteq s_i$ and ${s'_j} \subseteq s_j$. 
		
   	 \item [ii)] If $s_i$ and $s_j$ $\alpha$-conflict ($\alpha(s_i,s_j)=1$), $|s_i \cap s_j| \geq 1$. 
		Furthermore, for any pair of subsets $s'_i \subseteq s_i$ and $s'_j \subseteq s_j$ with $\alpha({s'_i},{s'_j})=0$
		$((s_i \cap s_j) \backslash (s'_i \cap s'_j)) \neq \emptyset$.
		The elements in $s_i \cap s_j$ are referred to as the \emph{conflicting elements}.

		 \item [iii)] $\alpha$ is computable in polynomial time in $n$.
\end{itemize}

\end{defi}

A \emph{maximal $\alpha()$-set packing} $\mathcal{M} \subseteq \mathcal{S}$ is a maximal collection of sets from $\mathcal{S}$ such that for each pair of sets $S_i,S_j \in \mathcal{M}$ ($i \neq j$) $\alpha(S_i,S_j)=0$, and for each $S \in \mathcal{S}\backslash\mathcal{M}$, $S$ $\alpha$-conflicts with some $S' \in \mathcal{M}$, i.e., $\alpha(S,S')=1$.
 
All graphs in this paper are undirected and simple, unless otherwise stated.  For a graph $G$, $V(G)$ and $E(G)$ denote its sets of vertices and edges, respectively. $|V(G)|$ is the order of the graph. For a set of vertices $S \subseteq V(G)$, $G[S]$ represents the subgraph induced by $S$ in $G$. The distance (shortest path) between two vertices $u$ and $v$ is denoted as $dist_{G}(u,v)$.
We use the letter $n$ to denote both $|\mathcal{U}|$ and $|V(G)|$.

\section{Packing Problems with Well-Conditioned Overlap}\label{GenericSection}

We start by developing an FPT-algorithm for the $r$-Set Packing with $\alpha()$-Overlap problem. After that, we provide a solution for $\Pi$-Packing with $\alpha()$-Overlap by reducing it to the set version. Our FPT-algorithm assumes that the function $\alpha()$ is well-conditioned. 

\subsection{An FPT Algorithm for the $r$-Set Packing with $\alpha()$-Overlap}\label{SetSection}
The next lemmas state important observations of a maximal $\alpha()$-set packing and are key components in the correctness of our algorithm.

\begin{lemma}\label{maximalisaksolution}
Let $\mathcal{M}$ be a maximal $\alpha()$-set packing. If $|\mathcal{M}|\geq k$, then $\mathcal{M}$ is a $(k,\alpha())$-set packing.
\end{lemma}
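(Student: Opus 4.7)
The plan is essentially to unwind the two definitions and observe that they align almost verbatim, so the lemma reduces to a single pointer between definitions together with a counting remark. By the definition of a maximal $\alpha()$-set packing, $\mathcal{M}$ is a collection of sets drawn from $\mathcal{S}$ satisfying $\alpha(S_i,S_j)=0$ for every distinct pair $S_i,S_j \in \mathcal{M}$. This pairwise non-conflict condition is exactly the condition required in the definition of a $(k,\alpha())$-set packing.

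Thus, first I would recall both definitions side by side. Then, assuming $|\mathcal{M}| \geq k$, I would exhibit $\mathcal{M}$ itself (or, if one prefers to match the cardinality exactly, any $k$-element subcollection $\mathcal{K} \subseteq \mathcal{M}$) as the desired witness. The collection $\mathcal{K}$ then contains at least $k$ sets by construction, and for every pair $S_i,S_j \in \mathcal{K}$ with $i \neq j$, we have $\alpha(S_i,S_j)=0$, inherited directly from the corresponding property of $\mathcal{M}$. This verifies both conditions in the definition of a $(k,\alpha())$-set packing.

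I do not anticipate any real obstacle: the only component of the definition of ``maximal $\alpha()$-set packing'' that is used is the internal non-conflict condition, not the maximality clause about $\mathcal{S} \setminus \mathcal{M}$. In particular, the hereditary property of $\alpha()$ from Definition~\ref{alphaCondition} is not needed here. The maximality and the hereditary property will presumably be invoked in subsequent lemmas (for example, to compare $\mathcal{M}$ against an unknown optimal packing by locating conflicts with elements of $\mathcal{M}$); for the present statement the argument is purely definitional.
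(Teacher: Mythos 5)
Your proposal is correct and matches the paper's argument in substance: the paper phrases it as a one-line proof by contradiction (a conflicting pair in $\mathcal{M}$ would violate the definition of a maximal $\alpha()$-set packing), but both proofs rest on the same observation that the pairwise non-conflict condition is built into the definition of $\mathcal{M}$ and the cardinality bound supplies the rest. Your added remark that neither the maximality clause nor the hereditary property is needed here is accurate.
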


\begin{proof}
Assume otherwise that $\mathcal{M}$ is not a $(k,\alpha())$-set packing. This would be only possible if there is at least one pair of sets $S_i,S_j$ in $\mathcal{M}$ for which $\alpha(S_i,S_j)=1$ but in that case $\mathcal{M}$ would not be a  maximal $\alpha()$-set packing. 
\end{proof}

\begin{lemma}\label{intersectionLemma}
Given an instance $(\mathcal{U},\mathcal{S},k)$ of $r$-Set Packing with $\alpha()$-Overlap, where $\alpha()$ is well-conditioned, let $\mathcal{K}$ and $\mathcal{M}$ be a $(k,\alpha())$-set packing and a maximal $\alpha()$-set packing, respectively. 
For each $S^* \in \mathcal{K}$, $S^*$ shares at least one element with at least one $S \in \mathcal{M}$. 
\end{lemma}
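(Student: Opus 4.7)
The plan is to argue by a short case analysis on whether $S^*$ itself lies in $\mathcal{M}$, using the contrapositive of condition (ii) of Definition \ref{alphaCondition} together with the maximality of $\mathcal{M}$. Since $S^* \in \mathcal{K} \subseteq \mathcal{S}$, exactly one of the following holds: either $S^* \in \mathcal{M}$, or $S^* \in \mathcal{S} \setminus \mathcal{M}$.

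First I would dispose of the easy case $S^* \in \mathcal{M}$: in that situation, $S^*$ trivially shares every one of its (at least one) elements with the member $S^*$ of $\mathcal{M}$, and we are done.

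For the remaining case $S^* \in \mathcal{S} \setminus \mathcal{M}$, I would invoke the defining property of a maximal $\alpha()$-set packing as stated just after Definition \ref{alphaCondition}: every $S \in \mathcal{S} \setminus \mathcal{M}$ must $\alpha$-conflict with some $S' \in \mathcal{M}$, i.e.\ $\alpha(S^*, S') = 1$ for at least one $S' \in \mathcal{M}$. Then condition (ii) of the well-conditioned requirement immediately yields $|S^* \cap S'| \geq 1$, giving the desired shared element.

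I expect no real obstacle here; the statement is essentially the contrapositive of condition (ii) combined with maximality, and the only thing worth highlighting is that condition (ii) is precisely what rules out the pathological possibility of two sets $\alpha$-conflicting without overlapping. If one preferred a single unified argument instead of a case split, one could argue by contradiction: assuming $S^* \cap S = \emptyset$ for every $S \in \mathcal{M}$, condition (ii) (contrapositive) gives $\alpha(S^*, S) = 0$ for every $S \in \mathcal{M}$, so $\mathcal{M} \cup \{S^*\}$ would still be a valid $\alpha()$-set packing strictly larger than $\mathcal{M}$, contradicting maximality. I would likely present this contradiction version since it is cleaner and mirrors the style of Lemma \ref{maximalisaksolution}.
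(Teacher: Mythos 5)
Your proof is correct and follows essentially the same route as the paper: the same case split on whether $S^* \in \mathcal{M}$, followed by invoking maximality of $\mathcal{M}$ to obtain an $\alpha$-conflicting $S \in \mathcal{M}$ and then condition (ii) of Definition \ref{alphaCondition} to conclude $|S^* \cap S| \geq 1$. The unified contradiction variant you sketch at the end is merely a stylistic repackaging of the same argument, so there is nothing substantive to add.
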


\begin{proof}
If $S^* \in \mathcal{M}$, the lemma simply follows. 
Assume by contradiction that there is a set $S^* \in \mathcal{K}$ such that $S^* \notin \mathcal{M}$ and there is no set $S \in \mathcal{M}$ $\alpha$-conflicting with $S^*$. However,  we could add $S^*$ to $\mathcal{M}$, contradicting its maximality. 
Thus, there exists at least one $S \in \mathcal{M}$ $\alpha$-conflicting with $S^*$.
Since $\alpha()$ is well-conditioned, by Definition \ref{alphaCondition} (ii) $|S \cap S^*| \geq 1$. \qed
\end{proof}

Our Bounded Search Tree algorithm (abbreviated as \texttt{BST-$\alpha()$-algorithm}) for $r$-Set Packing with $\alpha()$-Overlap has three main components: \texttt{Initialization}, \texttt{Greedy}, and \texttt{Branching}. We start by computing a maximal $\alpha()$-set packing $\mathcal{M}$ of $\mathcal{S}$. If $|\mathcal{M}|\geq k$ then $\mathcal{M}$ is a $(k,\alpha())$-set packing and the \texttt{BST-$\alpha()$-algorithm} stops (Lemma \ref{maximalisaksolution}). Otherwise, we create a search tree $T$ where at each node $i$, there is a collection of sets $\mathbf{Q^i}=\{s^i_1,\dots,s^i_k\}$ with $s^i_j \subseteq S$ for some $S \in \mathcal{S}$. The goal is \emph{to complete} $\mathbf{Q^i}$ to a solution, if possible. That is, to find $k$ sets $\mathcal{K}=\{S_1 \dots S_k\}$ of $\mathcal{S}$, such that $s^i_j \subseteq S_j$ for $1 \leq j \leq k$ and $\mathcal{K}$ is a $(k,\alpha())$-set packing. 

The children of the root of $T$ are created according to a procedure called \texttt{Initialization}. After that for each node $i$ of $T$, a routine called \texttt{Greedy} will attempt to complete $\mathbf{Q^i}$ to $(k,\alpha())$-set packing. If \texttt{Greedy} succeeds then the \texttt{BST-$\alpha()$-algorithm} stops. Otherwise, the next step is to create children of the node $i$ using the procedure \texttt{Branching}. The \texttt{BST-$\alpha()$-algorithm} will repeat \texttt{Greedy} in these children. 
Eventually, the \texttt{BST-$\alpha()$-algorithm} either finds a solution at one of the leaves of the tree or determines that it is not possible to find one. 

We next explain the three main components of the \texttt{BST-$\alpha()$-algorithm} individually. Let us start with the \texttt{Initialization} routine. By Lemma \ref{intersectionLemma}, if there is a solution $\mathcal{K}=\{S^*_1,\dots,S^*_k\}$ each $S^*_j$ contains at least one element of $val(\mathcal{M})$. 
Notice that each element of $val(\mathcal{M})$ could be in at most $k$ sets of $\mathcal{K}$. 
Thus, we create a set $\mathcal{M}_k$ that contains $k$ copies of each element in $val(\mathcal{M})$. That is, per each element $u \in val(\mathcal{M})$ there are $k$ copies $u_1 \dots u_k$ in $\mathcal{M}_k$ and $|\mathcal{M}_k| = k |val(\mathcal{M})|$. The root will have a child $i$ for each possible combination of $k$ elements from $\mathcal{M}_k$. A set of $\mathbf{Q^i}$ is initialized with one element of that combination. For example, if the combination is $\{u_1,u_2,u_k,a_1,b_1\}$, $\mathbf{Q^i}=\{\{u_1\},\{u_2\},\{u_k\},\{a_1\},\{b_1\}\}$. After that, we remove the indices from the elements in $\mathbf{Q^i}$, e.g., $\mathbf{Q^i}=\{\{u\},\{u\},\{u\},\{a\},\{b\}\}$. 

At each node $i$, the \texttt{Greedy} routine returns a collection of sets $\mathbf{Q^{gr}}$. 
Initially, $\mathbf{Q^{gr}} = \emptyset$ and $j=1$. 
At iteration $j$, \texttt{Greedy} searches for a set $S$ that contains $s^i_j \in \mathbf{Q^i}$ (the $j$th set of $\mathbf{Q^i}$) subject to two conditions (**): (1) $S$ is not already in $\mathbf{Q^{gr}}$ and (2) $S$ does not $\alpha$-conflict with any set in $\mathbf{Q^{gr}}$ (i.e., $\alpha(S,S')=0$ for each $S' \in \mathbf{Q^{gr}}$). 
If such set $S$ exists, \texttt{Greedy} adds $S$ to $\mathbf{Q^{gr}}$, i.e., $\mathbf{Q^{gr}} = \mathbf{Q^{gr}} \cup S$ and  continues with iteration $j=j+1$. 
Otherwise, \texttt{Greedy} stops executing and returns $\mathbf{Q^{gr}}$.
If $|\mathbf{Q^{gr}}|=k$, then $\mathbf{Q^{gr}}$ is a $(k,\alpha())$-set packing and the \texttt{BST-$\alpha()$-algorithm} stops.
If $\mathbf{Q^i}$ cannot be completed into a solution (Lemma \ref{terminationLemma}), \texttt{Greedy} returns $\mathbf{Q^{gr}} = \infty$.
\texttt{Greedy} searches for the set $S$ in the collection $\mathcal{S}(s^i_j,\mathbf{Q^i}) \subseteq \mathcal{S}(s^i_j)$ which is obtained as follows: add a set $S' \in \mathcal{S}(s^i_j)$ to $\mathcal{S}(s^i_j,\mathbf{Q^i})$, if $S'$ does not $\alpha$-conflict with any set in $(\mathbf{Q^i} \backslash s^i_j)$ and $S'$ is distinct of each set in $(\mathbf{Q^i} \backslash s^i_j)$.

The \texttt{Branching} procedure executes every time that \texttt{Greedy} does not return a $(k,\alpha())$-set packing but $\mathbf{Q^i}$ could be completed into one. That is, $\mathbf{Q^{gr}} \neq \infty$ and  $|\mathbf{Q^{gr}}|<k$.
Let $j=|\mathbf{Q^{gr}}|+1$ and $s^i_j$ be the $j$th set in $\mathbf{Q^i}$. 
\texttt{Greedy} stopped at $j$ because each set $S \in \mathcal{S}(s^i_j,\mathbf{Q^i})$ either it was already contained in $\mathbf{Q^{gr}}$, or  it $\alpha$-conflicts with at least one set in $\mathbf{Q^{gr}}$ (see **). 
We will use the conflicting elements between $\mathcal{S}(s^i_j,\mathbf{Q^i})$ and $\mathbf{Q^{gr}}$ to create children of the node $i$.
Let $I^*$ be the set of those conflicting elements. 
\texttt{Branching} creates a child $l$ of the node $i$ for each element $u_l \in I^*$. The collection $\mathbf{Q^l}$ of child $l$ is the same as the collection $\mathbf{Q^i}$ of its parent $i$ with the update of the set $s^i_j$ as $s^i_j \cup u_l$, i.e., $\mathbf{Q^l}=\{s^i_1,\dots,s^i_{j-1}, s^i_j \cup u_l, s^i_{j+1},\dots, s^i_{k}\}$. 
The set $I^*$ is obtained as $I^* = I^* \cup ((S \backslash s^i_j) \cap S')$ for each pair $S \in \mathcal{S}(s^i_j,\mathbf{Q^i})$ and $S' \in \mathbf{Q^{gr}}$ that $\alpha$-conflict ($\alpha(S,S')=1$) or that $S=S'$. The pseudocode of all these routines is detailed in the Appendix. 

\subsubsection{Correctness.}

With the next series of lemmas we establish the correctness of the \texttt{BST-$\alpha()$-algorithm} for any well-conditioned function $\alpha()$.

A collection $\mathbf{Q^i}=\{s^i_1,\dots,s^i_j,\dots,s^i_k\}$ is a \emph{partial-solution} of a $(k,\alpha())$-set packing $\mathcal{K}=\{S^*_1,\dots,S^*_j,\dots,S^*_k\}$ if and only if $s^i_j \subseteq S^*_j$, for $1 \leq j \leq k$. The next lemma states the correctness of the \texttt{Initialization} routine and it follows because we created a node for each selection of $k$ elements from $\mathcal{M}_k$, i.e., $\binom{\mathcal{M}_k}{k}$. 

\begin{lemma}\label{InitializationLemma}
If there exists at least one $(k,\alpha())$-set packing of $\mathcal{S}$, at least one of the children of the root will have a partial-solution.
\end{lemma}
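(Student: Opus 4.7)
The plan is to show that the particular combination of $k$ elements which witnesses the solution $\mathcal{K} = \{S^*_1, \ldots, S^*_k\}$ is one of the combinations enumerated by \texttt{Initialization}, and that the child produced from this combination yields a partial-solution by construction.

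First, I would apply Lemma \ref{intersectionLemma} to each $S^*_j \in \mathcal{K}$: since $\mathcal{M}$ is a maximal $\alpha()$-set packing and $\alpha()$ is well-conditioned, there exists some $S \in \mathcal{M}$ with $|S \cap S^*_j| \geq 1$, so I can select an element $u_j \in S^*_j \cap val(\mathcal{M})$. Doing this for every $j$ produces a multiset $\{u_1, \ldots, u_k\}$ of elements of $val(\mathcal{M})$, one for each target set of the hypothetical solution.

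Next, I would argue that this multiset corresponds to a valid combination of $k$ elements drawn from $\mathcal{M}_k$. The subtlety here (which I view as the only real content of the lemma) is that several $u_j$'s may coincide as elements of $val(\mathcal{M})$, because one element can belong to more than one set of $\mathcal{K}$. However, any fixed element $u \in val(\mathcal{M})$ can appear as $u_j$ for at most $k$ distinct indices $j$, simply because $|\mathcal{K}| = k$. Since $\mathcal{M}_k$ provides exactly $k$ indexed copies of each element of $val(\mathcal{M})$, I can injectively assign each $u_j$ to a distinct copy in $\mathcal{M}_k$, yielding a genuine $k$-subset of $\mathcal{M}_k$.

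Finally, because \texttt{Initialization} explicitly creates a child of the root for every $k$-subset of $\mathcal{M}_k$, one such child, call it $i$, corresponds to the subset produced above. After stripping indices, the collection at $i$ is $\mathbf{Q}^i = \{\{u_1\}, \ldots, \{u_k\}\}$, and by construction $\{u_j\} \subseteq S^*_j$ for each $j$, so $\mathbf{Q}^i$ is a partial-solution of $\mathcal{K}$. The main (and only) obstacle is the bookkeeping with multiplicities just described; once $\mathcal{M}_k$ is known to carry $k$ copies per element, the argument is immediate. \qed
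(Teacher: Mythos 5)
Your proposal is correct and follows essentially the same route as the paper's proof: invoke Lemma \ref{intersectionLemma} to pick an element of $val(\mathcal{M})$ inside each $S^*_j$, use the $k$ copies in $\mathcal{M}_k$ to absorb repeated elements, and observe that \texttt{Initialization} enumerates every $k$-combination of $\mathcal{M}_k$, so the witnessing child exists. Your write-up merely makes explicit the multiplicity bookkeeping that the paper states in one sentence.
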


\begin{proof}
By Lemma \ref{intersectionLemma}, every set in $\mathcal{K}$ contains at least one element of $val(\mathcal{M})$. 
It is possible that the same element be in at most $k$ different sets of $\mathcal{K}$. Therefore, we replicated $k$ times each element in $val(\mathcal{M})$ collected in $\mathcal{M}_k$. 
Since we created a node for each selection of $k$ elements from $\mathcal{M}_k$, i.e., $\binom{\mathcal{M}_k}{k}$, the lemma follows. \qed
\end{proof}

The next lemma states that the \texttt{BST-$\alpha()$-algorithm} correctly stops attempting to propagate a collection $\mathbf{Q^i}$.
Due to the (i) property of a well-conditioned $\alpha()$, we can immediately discard a collection $\mathbf{Q^i}$, if it has a pair of sets that $\alpha$-conflicts. In addition, the collection $\mathcal{S}(s^i_j,\mathbf{Q^i})$ contains all sets from $ \mathcal{S}(s^i_j)$ that are not $\alpha$-conflicting with any set in $\mathbf{Q^i}$ (excluding $s^i_j$). So again, if $\mathbf{Q^i}$ is a partial-solution, due to the (i) property, $\mathcal{S}(s^i_j,\mathbf{Q^i})$ cannot be empty. 

\begin{lemma}\label{terminationLemma}
Assuming $\alpha()$ is well-conditioned, $\mathbf{Q^i}$ is not a partial solution either: i. if there is a pair of distinct sets in $\mathbf{Q^i}$ that $\alpha$-conflict, or ii. if for some $s^i_j \in \mathbf{Q^i}$ $\mathcal{S}(s^i_j,\mathbf{Q^i})=\emptyset$.
\end{lemma}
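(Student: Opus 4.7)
The plan is to prove the contrapositive of both parts simultaneously: assume $\mathbf{Q^i}$ is a \emph{partial-solution} of some $(k,\alpha())$-set packing $\mathcal{K}=\{S^*_1,\dots,S^*_k\}$ (so $s^i_j \subseteq S^*_j$ for every $j$), and show that neither hypothesis (i) nor (ii) of the lemma can hold. Both arguments pivot on the hereditary property of a well-conditioned $\alpha()$ stated in Definition~\ref{alphaCondition}(i): if two sets do not $\alpha$-conflict, then no pair of their subsets can $\alpha$-conflict either.

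For part (i), suppose towards contradiction that some pair $s^i_p, s^i_q \in \mathbf{Q^i}$ with $p \neq q$ satisfies $\alpha(s^i_p, s^i_q) = 1$. Since $\mathcal{K}$ is a $(k,\alpha())$-set packing we have $\alpha(S^*_p, S^*_q) = 0$; applying heredity to the inclusions $s^i_p \subseteq S^*_p$ and $s^i_q \subseteq S^*_q$ immediately forces $\alpha(s^i_p, s^i_q) = 0$, contradicting the hypothesis.

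For part (ii), I will exhibit an explicit witness in $\mathcal{S}(s^i_j, \mathbf{Q^i})$, contradicting its supposed emptiness. The natural candidate is $S^*_j$ itself: membership in $\mathcal{S}(s^i_j)$ is immediate from $s^i_j \subseteq S^*_j$, and for each $p \neq j$, combining $\alpha(S^*_j, S^*_p) = 0$ with $s^i_p \subseteq S^*_p$ and heredity yields $\alpha(S^*_j, s^i_p) = 0$, so $S^*_j$ does not $\alpha$-conflict with any member of $\mathbf{Q^i}\setminus\{s^i_j\}$.

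The main obstacle I foresee lies in the distinctness clause built into the definition of $\mathcal{S}(s^i_j, \mathbf{Q^i})$, which additionally requires the witness to differ, as a set, from every $s^i_p$ with $p \neq j$. If $S^*_j$ happens to coincide with some $s^i_p$, then $S^*_j = s^i_p \subseteq S^*_p$ and the distinctness of the members of $\mathcal{K}$ forces $S^*_j \subsetneq S^*_p$; I can then replay the argument with $S^*_p$ in place of $S^*_j$, observing that $S^*_p$ still contains $s^i_j$ and, by the same heredity reasoning (together with $\alpha(S^*_p, s^i_p) = \alpha(S^*_p, S^*_j) = 0$ in the potentially awkward case $q = p$), does not $\alpha$-conflict with any $s^i_q$, $q \neq j$. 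Iterating along the resulting strictly ascending chain $S^*_j \subsetneq S^*_p \subsetneq \dots$ must terminate within the finite family $\mathcal{K}$, yielding a witness genuinely distinct from every $s^i_r$ with $r \neq j$, as required.
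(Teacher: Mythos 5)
Your proof is correct, and for part (i) and the core of part (ii) it follows the same route as the paper: argue the contrapositive, use heredity of a well-conditioned $\alpha()$ to show no pair in a partial-solution can $\alpha$-conflict, and exhibit $S^*_j$ as a witness for $\mathcal{S}(s^i_j,\mathbf{Q^i})\neq\emptyset$. Where you genuinely diverge is on the distinctness clause, and there your treatment is actually stronger than the paper's. The paper proves part (ii) via a stronger claim asserting that $S^*_j$ itself always belongs to $\mathcal{S}(s^i_j,\mathbf{Q^i})$, and dismisses the case $S^*_j=s^i_p$ (for some $p\neq j$) by saying it would make $(\mathbf{Q^i}\backslash s^i_j)\cup S^*_j$ fail to be a partial-solution ``by (i)''---but (i) only rules out $\alpha$-conflicting pairs, not coincident ones, and the coincidence can in fact occur (e.g.\ $s^i_j=\{a\}\subseteq S^*_j=\{a,b\}$ while $s^i_p=\{a,b\}\subseteq S^*_p=\{a,b,c\}$). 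In that situation the paper's claim is false as literally stated, though the lemma itself survives. Your ascending-chain argument---replace the coinciding witness $S^*_j$ by the strictly larger $S^*_p\supseteq s^i_p=S^*_j\supseteq s^i_j$, check via heredity (and via $\alpha(S^*_p,S^*_j)=0$ for the index $q=p$) that the new candidate still conflicts with nothing in $\mathbf{Q^i}\backslash s^i_j$, and terminate by finiteness of $\mathcal{K}$---proves exactly the weaker statement the lemma needs and closes this gap. The only cost is that you establish nonemptiness of $\mathcal{S}(s^i_j,\mathbf{Q^i})$ rather than membership of $S^*_j$ specifically; if the paper later relies on the stronger Claim (it does, in Lemma~\ref{BranchingLemma}), that downstream use would need the same repair, but as a proof of the lemma as stated yours is complete.
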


\begin{proof}
(i) Suppose otherwise that $\mathbf{Q^i}$ is a partial-solution, but $s^i_j,s^i_l$ $\alpha$-conflict. 
Since $\mathbf{Q^i}$ is a partial-solution, $s^i_j \subseteq S^*_j$ and $s^i_l \subseteq S^*_l$ where  $S^*_j$,$S^*_l$ $\in \mathcal{K}$ and $\mathcal{K}$ is a $(k,\alpha())$-set packing. 

The pair $S^*_j$,$S^*_l$ does not $\alpha$-conflict, otherwise, $\mathcal{K}$ would not be a solution.
However, $\alpha()$ is hereditary, $s^i_j \subseteq S^*_j$, and $s^i_l \subseteq S^*_l$, thus, $s^i_j$ and $s^i_l$ do not $\alpha$-conflict either.

(ii) To prove the second part of the lemma, we will prove the next stronger claim. 

\indent \begin{cl}\label{FeasibleSponsorsLemma}
If $\mathbf{Q^i}=\{s^i_1,\dots, s^i_j, \dots, s^i_k\}$ is a partial-solution then $S^*_j \in \mathcal{S}(s^i_j,\mathbf{Q^i})$ for each $1 \leq j \leq k$.
\end{cl}

\begin{proof}
Assume by contradiction that $\mathbf{Q^i}$ is a partial-solution but $S^*_j \notin \mathcal{S}(s^i_j,\mathbf{Q^i},\alpha)$ for some $j$.

If $\mathbf{Q^i}$ is a partial-solution, $s^i_j \subseteq S^*_j \in \mathcal{K}$ and $((\mathbf{Q^i} \backslash s^i_j) \cup S^*_j$) is a partial-solution as well. 
The set $S^*_j \in \mathcal{S}(s^i_j)$ and $\mathcal{S}(s^i_j,\mathbf{Q^i}) \subseteq \mathcal{S}(s^i_j)$ (see Algorithm \ref{FeasibleSponsorsAlgorithm} for the computation of $\mathcal{S}(s^i_j,\mathbf{Q^i})$). 
The only way that $S^*_j$ would not be in $\mathcal{S}(s^i_j,\mathbf{Q^i})$ is if there is at least one set $S$ in $(\mathbf{Q^i} \backslash s^i_j)$ that $\alpha$-conflicts with $S^*_j$ or if $S^*_j$ is equal to a set in $(\mathbf{Q^i} \backslash s^i_j)$ but then $((\mathbf{Q^i} \backslash s^i_j) \cup S^*_j)$ would not be a partial-solution a contradiction to (i). \qed
\end{proof} 
\qed
\end{proof}

\texttt{Branching} creates at least one child whose collection is a partial-solution, if the collection of the parent is a partial-solution as well. Recall that $I^*$ is computed when \texttt{Greedy} stopped its execution at some $j \leq k$, i.e., it could not add a set that contains $s^i_j$ to $\mathbf{Q^{gr}}$. If $\mathbf{Q^i}$ is a partial solution, $s^i_j \subset S^*_j$ and $S^*_j \in \mathcal{K}$. Given property (ii) for a well-conditioned $\alpha()$, $S^*_j$ must be intersecting in at least one element with at least on set in $\mathbf{Q^{gr}}$. Therefore, at least one element of $S^*_j$ will be in $I^*$.

\begin{lemma}\label{BranchingLemma}
If $\mathbf{Q^i}=\{s^i_1,\dots,s^i_j,\dots,s^i_k\}$ is a partial-solution then there exists at least one $u_l \in I^*$ such that $\mathbf{Q^i}=\{s^i_1,\dots, s^i_j \cup u_l, \dots,s^i_k\}$ is a partial-solution.
\end{lemma}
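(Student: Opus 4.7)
The plan is to exhibit an element $u_l \in I^*$ that lies in $S^*_j \setminus s^i_j$; then $s^i_j \cup \{u_l\} \subseteq S^*_j$ while the other coordinates of $\mathbf{Q^i}$ are untouched, so the updated collection is still a partial-solution of the same witness $\mathcal{K}=\{S^*_1,\dots,S^*_k\}$. By Claim~\ref{FeasibleSponsorsLemma}, $S^*_j \in \mathcal{S}(s^i_j, \mathbf{Q^i})$, and since \texttt{Greedy} stopped at iteration $j$ this candidate was rejected by condition (**); hence either $S^*_j \in \mathbf{Q^{gr}}$ or $\alpha(S^*_j, T)=1$ for some $T \in \mathbf{Q^{gr}}$. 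I would split on these two scenarios.

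In the conflict scenario, I pick such a $T$. Since $T$ was admitted by greedy at an earlier iteration $\ell < j$, it lies in $\mathcal{S}(s^i_\ell, \mathbf{Q^i})$, whose definition forces $\alpha(s^i_j, T) = 0$. I would then feed the conflicting pair $(S^*_j, T)$ together with the non-conflicting subpair $(s^i_j, T)$ into Definition~\ref{alphaCondition}~(ii), extracting an element of $(S^*_j \cap T) \setminus (s^i_j \cap T)$. Such an element lies in $T$ yet not in $s^i_j \cap T$, so it sits outside $s^i_j$ entirely, and therefore belongs to $(S^*_j \setminus s^i_j) \cap T \subseteq I^*$ via the $\alpha$-conflict clause in the definition of $I^*$.

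The main obstacle I anticipate is the duplicate scenario $S^*_j \in \mathbf{Q^{gr}}$, say $S^*_j = T_\ell$: if $s^i_j$ had already been grown up to equal $S^*_j$, every contribution to $I^*$ would appear to vanish. The escape is to invoke the distinctness clause in the definition of $\mathcal{S}(s^i_\ell, \mathbf{Q^i})$, which forces $T_\ell$ to differ from every member of $\mathbf{Q^i} \setminus s^i_\ell$, and in particular from $s^i_j$; together with $s^i_j \subseteq S^*_j = T_\ell$ this yields the strict containment $s^i_j \subsetneq S^*_j$. The pair $(S, S') = (S^*_j, S^*_j)$ then falls under the $S = S'$ clause of the $I^*$ construction and contributes the nonempty set $S^*_j \setminus s^i_j$ to $I^*$, from which $u_l$ can be drawn.
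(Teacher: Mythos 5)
Your proof is correct and follows essentially the same route as the paper's: invoke Claim~\ref{FeasibleSponsorsLemma} to place $S^*_j$ in $\mathcal{S}(s^i_j,\mathbf{Q^i})$, split on whether \texttt{Greedy} rejected it because it was a duplicate or because it $\alpha$-conflicted with some $T\in\mathbf{Q^{gr}}$, and in the conflict case apply Definition~\ref{alphaCondition}~(ii) to the pair $(S^*_j,T)$ with the non-conflicting subpair $(s^i_j,T)$ obtained from $T\in\mathcal{S}(s^i_\ell,\mathbf{Q^i})$. The only difference is presentational (direct argument instead of contradiction), and your handling of the duplicate case via the distinctness clause actually spells out a step the paper dismisses with ``this immediately follows.''
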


\begin{proof}
Assume to the contrary that $\mathbf{Q^i}=\{s^i_1,\dots,s^i_j,\dots,s^i_k\}$ is a partial-solution but that there exists no element $u_l \in I^*$ such that $\mathbf{Q^i}=\{s^i_1,\dots, s^i_j \cup u_l, \dots,s^i_k\}$ is a partial-solution. 
This can only be possible if $(S^*_j \backslash s^i_j)  \cap I^* = \emptyset$.

First, given that $\mathbf{Q^i}$ is a partial-solution $S^*_j \in \mathcal{S}(s^i_j,\mathbf{Q^i},\alpha)$ (Claim \ref{FeasibleSponsorsLemma}).

In addition, either $S^*_j$ is already in $\mathbf{Q^{gr}}$ (i.e, it is equal to some set $S' \in \mathbf{Q^{gr}}$) or $S^*_j$ must be $\alpha$-conflicting with at least one set $S' \in \mathbf{Q^{gr}}$; otherwise, $S^*_j$ would have been selected by \texttt{Greedy}. 
Any of these situations implies that $|S^*_j \cap S'| \geq 1$ (Definition \ref{alphaCondition} (ii)).
By the computation of $I^*$ (Algorithm \ref{BranchRoutine}), $S^*_j \cap S' \subseteq I^*$.

Now it remains to show, that at least one element of $S^*_j \cap S'$ is in $S^*_j \backslash s^i_j$. This will guarantee that the set $s^i_j$ will be increased by one element at the next level of the tree. This immediately follows if $S^*_j = S'$.

Thus, we will show it for the case that $S^*_j \neq S'$ but $S^*_j$ $\alpha$-conflicts with $S'$.
Suppose that $(S^*_j \cap S') \cap (S^*_j \backslash s^i_j) = \emptyset$ by contradiction.
Recall that $S'$ contains some set $s^i_h$ of $\mathbf{Q^i}$ (for some $h \leq j$).
Furthermore, $S' \in \mathcal{S}(s^i_h,\mathbf{Q^i},\alpha)$; otherwise $S'$ would not have been selected by \texttt{Greedy}.

If $S'$ is $\alpha$-conflicting with $S^*_j$ but $S'$ is not $\alpha$-conflicting with $s^i_j$ (otherwise $S'$ would not have been in $\mathcal{S}(s^i_h,\mathbf{Q^i},\alpha)$), then $(S' \cap (S^*_j \backslash s^i_j)) \neq \emptyset$ by property (ii) in Definition \ref{alphaCondition}. \qed
\end{proof}

\begin{theorem}
The \texttt{BST-$\alpha()$-algorithm} finds a $(k,\alpha())$-set packing of $\mathcal{S}$, if $\mathcal{S}$ has at least one and $\alpha()$ is well-conditioned.
\end{theorem}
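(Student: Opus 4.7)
The plan is to combine the preceding lemmas via induction along a well-chosen root-to-leaf path in the search tree $T$. If the maximal set packing $\mathcal{M}$ computed during \texttt{Initialization} already satisfies $|\mathcal{M}|\geq k$, then Lemma~\ref{maximalisaksolution} closes the case immediately. Otherwise, fix any $(k,\alpha())$-set packing $\mathcal{K}=\{S^*_1,\dots,S^*_k\}$ guaranteed by the hypothesis, and by Lemma~\ref{InitializationLemma} pick a root-child $i_0$ whose collection $\mathbf{Q^{i_0}}$ is a partial-solution of $\mathcal{K}$ (after reindexing the $S^*_j$'s if necessary). The induction hypothesis at a node $i$ on the path will be: $\mathbf{Q^i}$ is a partial-solution of $\mathcal{K}$.

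At such a node $i$, the algorithm first runs \texttt{Greedy}. If \texttt{Greedy} returns $\mathbf{Q^{gr}}$ with $|\mathbf{Q^{gr}}|=k$, then by construction $\mathbf{Q^{gr}}$ is a $(k,\alpha())$-set packing and \texttt{BST-$\alpha()$-algorithm} halts successfully. If \texttt{Greedy} returns $\infty$, Lemma~\ref{terminationLemma} yields a contradiction: neither an internal $\alpha$-conflict inside $\mathbf{Q^i}$ (excluded via the hereditary property of $\alpha()$) nor any empty $\mathcal{S}(s^i_j,\mathbf{Q^i})$ (excluded by Claim~\ref{FeasibleSponsorsLemma}) is compatible with $\mathbf{Q^i}$ being a partial-solution. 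Hence \texttt{Greedy} must halt with $\mathbf{Q^{gr}}\neq\infty$ and $|\mathbf{Q^{gr}}|<k$, which triggers \texttt{Branching}; Lemma~\ref{BranchingLemma} then produces a child $i'$ with $\mathbf{Q^{i'}}$ still a partial-solution of $\mathcal{K}$ and $|s^{i'}_j|=|s^i_j|+1$ for exactly one index $j$, extending the path by one level.

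Termination of this descent is controlled by the potential $\Phi(i)=\sum_{j=1}^{k}|s^i_j|$, which strictly increases at every \texttt{Branching} step yet is upper-bounded by $\sum_{j=1}^{k}|S^*_j|\leq rk$. Consequently the branching outcome cannot recur indefinitely along the partial-solution path, so the Greedy-success outcome must eventually occur at some node on this path, and the algorithm outputs a $(k,\alpha())$-set packing as claimed.

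The one step requiring care—the main obstacle—is certifying that whenever the inductive partial-solution invariant holds, \texttt{Greedy} cannot return $\infty$. This amounts to simultaneously invoking both parts of Lemma~\ref{terminationLemma}: the hereditary property of $\alpha()$ rules out internal conflicts within $\mathbf{Q^i}$, while Claim~\ref{FeasibleSponsorsLemma} guarantees that the intended sponsor $S^*_j$ remains a valid candidate inside $\mathcal{S}(s^i_j,\mathbf{Q^i})$ even after repeated branchings. Once this non-termination at partial-solution nodes is pinned down, the induction on $\Phi$ runs smoothly and the remainder is bookkeeping.
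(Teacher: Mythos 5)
Your proposal is correct and assembles the preceding lemmas (Lemma~\ref{InitializationLemma}, Lemma~\ref{terminationLemma} with Claim~\ref{FeasibleSponsorsLemma}, and Lemma~\ref{BranchingLemma}) in exactly the way the paper intends; the potential $\Phi(i)=\sum_j |s^i_j|\leq rk$ is the same boundedness argument the paper uses implicitly when it bounds the tree height by $(r-1)k$. No gaps.
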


\subsubsection{Running Time.}
The number of children of the root is given by  $\binom{|\mathcal{M}_k|}{k} \leq \binom{k(r(k-1))}{k} = O((rk^2)^k)$ and the height of the tree is at most $(r-1)\,k$.  The number of children of each node at level $h$ is equivalent to the size of $I^*$ at each level $h$. The number of elements in $val(\mathbf{Q^{gr}})$ is at most $r(k-1)$, thus, $|I^*| \leq r(k-1)$. Therefore, the size of the tree is given by: $\binom{k\,(r(k-1))}{k} \, \prod_{h=1}^{(r-1)k} r\,(k-1)$ which is $O(r^{rk}\, k^{(r+1)\, k})$. In addition, $\alpha()$  is computable in $O(n^c)$ for some constant $c$ (property (iii), Definition \ref{alphaCondition}).

\begin{theorem}\label{runningtime}
The $r$-Set Packing with $\alpha()$-Overlap problem can be solved in \\ $O(r^{rk}\, k^{(r+1)\,k}\, n^{cr})$ time, when $\alpha()$ is well-conditioned.
\end{theorem}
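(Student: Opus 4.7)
The plan is to multiply three factors: (i) the size of the search tree $T$, (ii) the polynomial work done at each node of $T$, and (iii) the cost of the initial preprocessing that produces $\mathcal{M}$. Since the paragraph preceding the theorem already sketches the tree-size estimate, my proof chiefly justifies the three individual bounds and then assembles them.

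For the tree size I would argue in three steps. \emph{(a) Root branching:} as the algorithm only builds $T$ when $|\mathcal{M}|<k$, we have $|val(\mathcal{M})|\le r(k-1)$, so $|\mathcal{M}_k|\le kr(k-1)$ and by construction the number of root-children is $\binom{|\mathcal{M}_k|}{k}=O((rk^2)^k)$. \emph{(b) Depth:} each invocation of \texttt{Branching} extends some $s^i_j \in \mathbf{Q^i}$ by an element of $I^*$ that is not already present in $s^i_j$. Because any partial-solution requires $s^i_j \subseteq S^*_j$ with $|S^*_j|\le r$, and \texttt{Initialization} starts $s^i_j$ as a singleton, each of the $k$ positions can be extended at most $r-1$ times, so the depth is at most $(r-1)k$. \emph{(c) Branching factor:} $I^*$ is drawn only from elements appearing in $\mathbf{Q^{gr}}$, which consists of fewer than $k$ sets of size at most $r$, so $|I^*|\le r(k-1)$. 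Multiplying these bounds yields a tree of size at most $O((rk^2)^k) \cdot (r(k-1))^{(r-1)k} = O(r^{rk}\, k^{(r+1)k})$.

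For the per-node work I would observe that every node-level routine (assembling $\mathcal{S}(s^i_j,\mathbf{Q^i})$, running \texttt{Greedy}, and forming $I^*$) traverses the at most $O(n^r)$ members of $\mathcal{S}$ and performs $\alpha()$-calls on pairs; each such call costs $O(n^c)$ by Definition \ref{alphaCondition}(iii). Computing the initial maximal packing $\mathcal{M}$ is a single greedy sweep of the same flavour. All of this is bounded by a polynomial in $n$ of degree at most $cr$, so we can charge $n^{cr}$ work per node. Combining with the tree-size bound and the (smaller) preprocessing cost yields the claimed $O(r^{rk}\, k^{(r+1)k}\, n^{cr})$ running time.

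The main obstacle is the depth argument: one must verify that the element $u_l \in I^*$ chosen by \texttt{Branching} is genuinely new to $s^i_j$, so that depth can be charged against enlarging the $k$ sets from size $1$ to at most $r$. This follows from Lemma \ref{BranchingLemma} together with the construction of $I^*$, which only admits elements of $(S\backslash s^i_j)\cap S'$ for conflicting pairs; hence every \texttt{Branching} step strictly enlarges some $s^i_j$, and the $(r-1)k$ bound on the depth of $T$ is justified.
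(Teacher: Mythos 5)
Your proposal is correct and follows essentially the same route as the paper: the root branching bound $\binom{|\mathcal{M}_k|}{k}=O((rk^2)^k)$, depth $(r-1)k$, branching factor $|I^*|\le r(k-1)$, and a polynomial per-node cost absorbed into $n^{cr}$. The only difference is that you spell out details the paper leaves implicit (that every element of $I^*$ lies outside $s^i_j$ so each branching step strictly enlarges some set, and the accounting of per-node work), which strengthens rather than changes the argument.
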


\subsection{The $\Pi$-Packing with $\alpha()$-Overlap problem}\label{GraphSection}
The $\Pi$-Packing with $\alpha()$-Overlap problem generalizes the $\mathcal{H}$-Packing with $t$-Overlap problem \cite{BstAlgorithm14} by including other community definitions in addition to prescribed graphs and by allowing more complex overlap restrictions.

We will represent a community through a graph property $\Pi$.
Intuitively, if a subgraph $H$ of order at most $r$ has the property $\Pi$ (called a \emph{$\Pi$-subgraph}), $H$ is a community. 
To obtain an FPT algorithm, we require however that $\Pi$ be verifiable in polynomial time in $n$ where $n=|V(G)|$.  

Examples of properties $\Pi$ that could represent communities are the following.
Let $S$ be an induced subgraph of $G$ with at most $r$ vertices.
$S$ is a community, if it has a density of at least $t$ ($|E(S)| \geq t$) and the number of edges connecting $S$ to rest of the network is at most a specific value \cite{Mishra07}.
$S$ is a community, if every vertex in $S$ is adjacent to at least $|V(S)|-c$ vertices in $S$ (for some constant $c$).
Observe that with our property $\Pi$, we still can use a family of graphs $\mathcal{H}$ to represent a community as in the $\mathcal{H}$-Packing with $t$-Overlap problem. In that case, $\Pi$ would correspond to the condition that $S$ is a community if $S$ is isomorphic to a graph $H$ in $\mathcal{H}$. 

In the $\Pi$-Packing with $\alpha()$-Overlap problem, we regulate the pairwise overlap with a function $\alpha: \mathcal{V}^r \times \mathcal{V}^r \to \{0,1\}$ where $\mathcal{V}^r$ is the collection of all possible subsets of vertices of $V(G)$ each of size at most $r$.
We say that two subgraphs $H_i$ and $H_j$ \emph{$\alpha$-conflict} if $\alpha(H_i,H_j)=1$.
Abusing the terminology, we extend the definition of a well-conditioned $\alpha()$ (Definition \ref{alphaCondition}) to consider subsets of vertices as well. This implies that $\mathcal{U}^r = \mathcal{V}^r$, $s_i=V(H_i)$ and $s_j=V(H_j)$ in Definition \ref{alphaCondition}. 

To provide a solution for the $\Pi$-Packing with $\alpha()$-Overlap problem, we will basically follow the approach of reducing this problem to the set version, i.e., to the $r$-Set Packing with $\alpha()$-Overlap problem. 
To this end, we first compute the collection of all induced $\Pi$-subgraphs of $G$, and we collect them in $\Pi_G$.
This is done by naively testing all sets of at most $r$ vertices from $G$.
We highlight that we are not asking to compute the largest subgraph of $G$ that follows $\Pi$, but rather only verifying whether a specific induced subgraph of at most $r$ vertices satisfies $\Pi$ or not.
In this way, $|\Pi_G| = O(n^{r})$.

Next, we construct an instance of $r$-Set Packing with $\alpha()$-Overlap as follows. The universe $\mathcal{U}$ equals $V(G)$ and there is a set $S=V(H)$ in $\mathcal{S}$ for each $H \in \Pi_G$. Furthermore, we require that $\alpha()$ be well-conditioned.

\begin{lemma}
The collection $\mathcal{S}$ has a $(k,\alpha())$-set packing if and only if $G$ has a $(k,\alpha())$-$\Pi$-packing.
\end{lemma}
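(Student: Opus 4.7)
The plan is to establish the equivalence via the natural bijection $H \mapsto V(H)$ between $\Pi_G$ and $\mathcal{S}$. This correspondence is one-to-one because each $H \in \Pi_G$ is an induced subgraph of $G$ and is therefore completely determined by its vertex set. Moreover, the extension of $\alpha()$ to vertex subsets (setting $s_i = V(H_i)$ and $s_j = V(H_j)$) was specifically chosen so that $\alpha(S_i, S_j) = \alpha(H_i, H_j)$ whenever $S_i = V(H_i)$ and $S_j = V(H_j)$. Both directions then reduce to unpacking the definitions along this bijection.

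For the forward direction, I would start with a $(k, \alpha())$-set packing $\{S_1, \dots, S_k\} \subseteq \mathcal{S}$. By construction of $\mathcal{S}$, each $S_j$ equals $V(H_j)$ for some $H_j \in \Pi_G$; hence each $H_j$ is an induced subgraph of $G$ on at most $r$ vertices that satisfies $\Pi$. The pairwise non-conflict condition $\alpha(S_i, S_j) = 0$ translates directly to $\alpha(H_i, H_j) = 0$, so $\{H_1, \dots, H_k\}$ satisfies both clauses in the definition of a $(k, \alpha())$-$\Pi$-packing.

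For the reverse direction, I would start with a $(k, \alpha())$-$\Pi$-packing $\{H_1, \dots, H_k\}$ of $G$. Each $H_j$ is an induced subgraph of order at most $r$ obeying $\Pi$, so by the definition of $\Pi_G$ (which contains every such subgraph, since we exhaustively test all vertex subsets of size at most $r$) we have $H_j \in \Pi_G$, and consequently $V(H_j) \in \mathcal{S}$. The pairwise constraint $\alpha(H_i, H_j) = 0$ becomes $\alpha(V(H_i), V(H_j)) = 0$ under the extended $\alpha()$, so $\{V(H_1), \dots, V(H_k)\}$ is a $(k, \alpha())$-set packing.

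I do not anticipate any real obstacle here; the lemma is essentially a translation result, and the work has already been done in the construction of $\mathcal{S}$ and in the explicit extension of $\alpha()$ to arbitrary vertex subsets. The only point worth emphasizing in writing is the completeness of $\Pi_G$, namely that every induced $\Pi$-subgraph on at most $r$ vertices is present in $\Pi_G$, which guarantees that the reverse direction can always exhibit the required sets in $\mathcal{S}$.
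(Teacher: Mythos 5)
Your proposal is correct and follows essentially the same argument as the paper: both directions translate between subgraphs and their vertex sets via $H \mapsto V(H)$ (resp.\ $S \mapsto G[S]$), using the construction of $\mathcal{S}$ from $\Pi_G$ and the fact that $\alpha()$ is defined on vertex sets so the non-conflict condition carries over verbatim. The only difference is cosmetic: you make the completeness of $\Pi_G$ explicit, which the paper leaves implicit.
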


\begin{proof}
We build a $(k,\alpha())$-set packing $\mathcal{K}_S$ from a $(k,\alpha)$-$\Pi$-packing. 
For each $\Pi$-subgraph $H_i$ in $\mathcal{K}$, we add a set $S_i = V(H_i)$ to $\mathcal{K}_S$. 
By our construction, $S_i \in \mathcal{S}$. 
Every pair of sets $S_i,S_j$ $\mathcal{K}_S$ do not $\alpha$-conflict. 
This follows because every pair $H_i,H_j \in \mathcal{K}_S$ do not $\alpha$-conflict, i.e.,  and $\alpha(H_i,H_j) = 0$.

Given a $(k,\alpha())$-set packing $\mathcal{K}_S$, we build a $(k,\alpha)$-$\Pi$-packing $\mathcal{K}$ of $G$. 
For each set $S_i$ in $\mathcal{K}_S$, we add a $\Pi$-subgraph $H_i=G[S_i]$. 
By our construction, $H_i$ is a $\Pi$-subgraph of $G$. 
Any pair of $\Pi$-subgraphs in $\mathcal{K}$ do not $\alpha$-conflict; otherwise, there would be a pair of sets in $\mathcal{K}_S$ $\alpha$-conflicting. \qed
\end{proof}

Given that $\Pi$ and $\alpha()$ are verifiable in $O(n^c)$ time for some constant $c$, we can hence state:

\begin{theorem}
$\Pi$-Packing with $\alpha()$-Overlap can be solved in $O(r^{rk}\, k^{(r+1)k}\, n^{cr})$ time, when $\alpha()$ is well-conditioned and $\Pi$ is polynomial time verifiable.
\end{theorem}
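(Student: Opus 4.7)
The plan is to leverage the reduction already exhibited in this subsection: given an instance $(G,k)$ of $\Pi$-Packing with $\alpha()$-Overlap, I would construct the set-packing instance $(\mathcal{U},\mathcal{S},k)$ with $\mathcal{U}=V(G)$ and one set $V(H)$ per induced $\Pi$-subgraph $H$ of order at most $r$, and then invoke Theorem \ref{runningtime} on it. The correctness half of the argument is already carried out by the preceding lemma, so only the running-time bookkeeping remains.

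For the preprocessing, I would first bound the cost of producing $\Pi_G$ (and hence $\mathcal{S}$). There are $\sum_{i=1}^r \binom{n}{i} = O(n^r)$ candidate vertex subsets of size at most $r$. For each such subset $S$, I test whether $G[S]$ satisfies $\Pi$ using the hypothesized polynomial-time verifier, at cost $O(n^{c_1})$ for some constant $c_1$. This yields $\Pi_G$ in $O(n^{r+c_1})$ time with $|\mathcal{S}| = |\Pi_G| = O(n^r)$, and constructing the set system from it is then linear in its description.

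Next I would apply Theorem \ref{runningtime} to the constructed instance. Because the authors explicitly extended Definition \ref{alphaCondition} to vertex subsets (taking $\mathcal{U}^r = \mathcal{V}^r$), the function $\alpha()$ given on $\mathcal{V}^r \times \mathcal{V}^r$ is simultaneously a well-conditioned function on $\mathcal{U}^r \times \mathcal{U}^r$ in the sense required by the set-packing version. Theorem \ref{runningtime} then provides a $(k,\alpha())$-set packing of $\mathcal{S}$, if one exists, in time $O(r^{rk}\,k^{(r+1)k}\,n^{c_2 r})$, where $c_2$ is the exponent arising from the cost of evaluating $\alpha()$. Taking $c := \max\{c_1 + 1, c_2\}$ absorbs the preprocessing cost $O(n^{r+c_1})$ into the single bound $O(r^{rk}\,k^{(r+1)k}\,n^{cr})$.

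The only subtle point, which I would want to flag explicitly, is that the reduction truly preserves the $\alpha()$-conflict structure: a pair of $\Pi$-subgraphs $H_i, H_j$ $\alpha$-conflicts exactly when their vertex sets $V(H_i), V(H_j)$ $\alpha$-conflict as elements of $\mathcal{V}^r = \mathcal{U}^r$. This is immediate from the authors' identification $s_i = V(H_i)$, so Theorem \ref{runningtime} applies verbatim. Everything else is routine constant-exponent bookkeeping, which is the most plausible place a reader might ask for clarification but poses no real obstacle.
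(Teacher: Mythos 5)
Your proposal matches the paper's own argument: the authors likewise construct the set instance by enumerating all $O(n^r)$ induced $\Pi$-subgraphs, rely on the preceding lemma for equivalence, and invoke Theorem~\ref{runningtime}, absorbing the polynomial costs of verifying $\Pi$ and $\alpha()$ into the $n^{cr}$ factor. Your explicit bookkeeping of the constants $c_1, c_2$ is slightly more careful than the paper's one-line justification, but the route is the same.
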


\section{Well-Conditioned Overlap Constraints}\label{AlphaConditionSection}

In the next section, we provide several examples of functions that are well-conditioned. That is, they satisfy the conditions in Definition \ref{alphaCondition}. In the first section, we focus on functions concerning the $r$-Set Packing with $\alpha()$-Overlap problem that by our discussion on Section \ref{GraphSection} could be used to restrict the overlap for graph version as well. After that in Section \ref{alphaGraphs}, we provide functions that consider graph properties. 

\subsection{Restricting the Overlap Between Sets}\label{alphaSets}


\vspace{0.15cm}
\noindent\textbf{Weighted Overlap.}
Let us assume that each $u_l \in \mathcal{U}$ has associated a non-negative weight $w(u_l)$. 
We could restrict the overlap region by its weight. 
The function $\alpha$-Weight($s_i,s_j$) returns \emph{no-conflict} if $w(s_i \cap s_j)= (\sum_{u \in (s_i \cap s_j)} w(u)) \leq w_t$ where $w_t \geq 0$ is a constant, else returns \emph{$\alpha$-conflict}.

Notice that we could use $\alpha$-Weight($s_i,s_j$) to upper-bound the overlap size by a constant $t$. To this end, we make $w(u)=1$ for each $u \in \mathcal{U}$, and $w_t=t$.

\begin{lemma}\label{weightProperties}
The function $\alpha$-Weight is well-conditioned.
\end{lemma}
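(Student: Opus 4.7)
The plan is to verify each of the three conditions in Definition~\ref{alphaCondition} directly, exploiting non-negativity of weights and the threshold form of $\alpha$-Weight. Throughout, write $w(s) = \sum_{u \in s} w(u)$ and recall that $s'_i \subseteq s_i$ and $s'_j \subseteq s_j$ imply $s'_i \cap s'_j \subseteq s_i \cap s_j$.

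For property (i), assume $\alpha\text{-Weight}(s_i,s_j)=0$, i.e.\ $w(s_i\cap s_j)\le w_t$. Since every $w(u)\ge 0$, the monotonicity of $w$ under inclusion gives $w(s'_i\cap s'_j)\le w(s_i\cap s_j)\le w_t$, so $\alpha\text{-Weight}(s'_i,s'_j)=0$. This takes care of heredity.

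For property (ii), suppose $\alpha\text{-Weight}(s_i,s_j)=1$, so $w(s_i\cap s_j) > w_t\ge 0$. Non-negativity forces at least one element in $s_i\cap s_j$ to carry positive weight, hence $|s_i\cap s_j|\ge 1$. For the second requirement, take any $s'_i\subseteq s_i$ and $s'_j\subseteq s_j$ with $\alpha\text{-Weight}(s'_i,s'_j)=0$. Then $w(s'_i\cap s'_j)\le w_t < w(s_i\cap s_j)$, so the weights of the two nested sets differ. Since equal sets would have equal weight, the containment $s'_i\cap s'_j\subseteq s_i\cap s_j$ must be strict, giving $(s_i\cap s_j)\setminus(s'_i\cap s'_j)\neq\emptyset$, as required.

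For property (iii), computing $\alpha\text{-Weight}(s_i,s_j)$ reduces to forming the intersection and summing at most $r$ weights, which is clearly polynomial in $n$. There is no real obstacle here; the only subtlety is making sure non-negativity of the weights is invoked in both (i) (to get monotonicity) and (ii) (to rule out a positive-weight intersection of size zero and to derive the strict containment from a strict inequality of weights). \qed
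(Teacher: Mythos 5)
Your proof is correct and follows essentially the same route as the paper: both arguments reduce heredity and the non-empty-difference condition to the non-negativity (hence monotonicity under inclusion) of the weight function, and both verify computability in $O(r)$ time. The only cosmetic difference is that you argue heredity directly via monotonicity while the paper phrases it as a contradiction, and in (ii) you deduce the strict containment from unequal weights rather than from the positive weight of the difference set; these are interchangeable.
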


\begin{proof}
(i) $\alpha$-Weight is \emph{hereditary}.
For any pair of sets $s_i$,$s_j$ with  $w(s_i \cap s_j) \leq w_t$ ($\alpha(s_i,s_j)=0$), there is no pair of subsets $s'_i \subseteq s_i,s'_j\subseteq s_j$ with $w(s'_i \cap s'_j) >w_t$ ($\alpha(s_i,s_j)=1)$. 
For the sake of contradiction, suppose otherwise.
Notice that $(s'_i \cap s'_j) \subseteq (s_i \cap s_j)$.
Thus, if $w(s_i \cap s_j) \leq w_t$ but $w(s'_i \cap s'_j) >w_t$ then there must be some elements in $(s_i \cap s_j) \backslash (s'_i \cap s'_j)$ with negative weights, a contradiction. 

(ii) If $\alpha(s_i,s_j)=1$ then $w(s_i \cap s_j) > w_t$ and $(s_i \cap s_j) \neq \emptyset$. 
Let $s'_i \subseteq s_i$ and  $s'_j \subseteq s_j$ be any pair of subsets with $\alpha(s'_i,s'_j)=0$, (i.e., $w(s'_i \cap s'_j) \leq w_t$).
Since $w(s_i \cap s_j) > w_t$, $w(s_i \cap s_j) - w(s'_i \cap s'_j) > 0$.
Therefore, $((s_i \cap s_j) \backslash (s'_i \cap s'_j)) \neq \emptyset$.

(iii) Finally, we can determine in $O(r)$ time, if $w(s_i \cap s_j) > w_t$. \qed
\end{proof}

We could also restrict the overlap region by both its size and its weight. This combined restriction is a well-conditioned function as well.

\vspace{0.15cm}
\noindent\textbf{Measures Overlap.}
A \emph{measure} of a set $S$ is a function $\mu$ that satisfies (i) $\mu(S) \geq 0$, (ii) $\mu(S)=0$ if $S = \emptyset$, and 
(iii) for any collection of pairwise disjoint subsets $S_1, \dots, S_l$ of $S$, $\mu(\bigcup^{l}_{i=1} S_i) = \sum^l_{i=1} \mu(S_i)$.
The last property implies that for any $S' \subseteq S$, $\mu(S') \leq \mu(S)$. 
Let $\mu$ be a measure on each set $\{s_i \cap s_j\}$ that is computable in polynomial time. The function $\alpha$-Measure($s_i,s_j)$ returns \emph{no-conflict} if $\mu(s_i \cap s_j) \leq t$ (where $t \geq 0$ is a constant) otherwise returns \emph{$\alpha$-conflict}.

\begin{lemma}
The function $\alpha$-Measure is well-conditioned.
\end{lemma}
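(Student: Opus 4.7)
The plan is to verify the three requirements of Definition \ref{alphaCondition} one by one, leaning on the two consequences of the measure axioms that the paper has already stated: monotonicity under containment ($\mu(S') \leq \mu(S)$ when $S' \subseteq S$) and finite additivity on disjoint unions. The argument will closely parallel the proof of Lemma \ref{weightProperties}, with $\mu$ playing the role that $w$ plays there; the essential difference is that we must now derive monotonicity and the ``positive remainder'' on disjoint complements from the measure axioms instead of from weights on individual elements.

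First, for hereditariness (i), I would take any $s_i, s_j$ with $\alpha(s_i,s_j)=0$, i.e., $\mu(s_i \cap s_j)\leq t$, and any subsets $s'_i \subseteq s_i$, $s'_j \subseteq s_j$. Because $(s'_i \cap s'_j) \subseteq (s_i \cap s_j)$, the monotonicity property of $\mu$ yields $\mu(s'_i \cap s'_j) \leq \mu(s_i \cap s_j) \leq t$, so $\alpha(s'_i,s'_j)=0$.

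For condition (ii), start from $\alpha(s_i,s_j)=1$, i.e., $\mu(s_i \cap s_j) > t \geq 0$. Since $\mu(\emptyset)=0$ by axiom (ii) of a measure, the contrapositive gives $s_i \cap s_j \neq \emptyset$, hence $|s_i \cap s_j|\geq 1$. Now let $s'_i \subseteq s_i$, $s'_j \subseteq s_j$ with $\alpha(s'_i,s'_j)=0$, so $\mu(s'_i \cap s'_j)\leq t$. The key step is the decomposition
\[
s_i \cap s_j \;=\; (s'_i \cap s'_j)\,\cup\,\bigl((s_i \cap s_j)\setminus(s'_i \cap s'_j)\bigr),
\]
which is a disjoint union of subsets of $s_i \cap s_j$. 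Finite additivity then yields
\[
\mu(s_i \cap s_j) \;=\; \mu(s'_i \cap s'_j) \,+\, \mu\bigl((s_i \cap s_j)\setminus(s'_i \cap s'_j)\bigr).
\]
Since $\mu(s_i \cap s_j) > t \geq \mu(s'_i \cap s'_j)$, the second summand is strictly positive, and invoking $\mu(\emptyset)=0$ once more forces $(s_i \cap s_j)\setminus(s'_i \cap s'_j) \neq \emptyset$, as required.

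Finally, (iii) is immediate from the hypothesis that $\mu$ is computable in polynomial time on each overlap set (together with the fact that $s_i \cap s_j$ itself is obtainable in $O(r)$ time). The only subtlety worth flagging is the tacit use of the contrapositive $\mu(S)>0 \Rightarrow S\neq \emptyset$, which follows from axiom (ii) of a measure, so there is no real obstacle — the proof is a clean translation of Lemma \ref{weightProperties} into the measure-theoretic language, with additivity replacing the pointwise cancellation that the weighted version relied on.
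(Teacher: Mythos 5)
Your proof is correct and follows essentially the same route as the paper's: monotonicity of $\mu$ for hereditariness, $\mu(\emptyset)=0$ with $t\geq 0$ for nonempty overlap, and additivity to force a nonempty remainder in condition (ii). The only difference is that you make explicit the disjoint decomposition $s_i\cap s_j=(s'_i\cap s'_j)\cup\bigl((s_i\cap s_j)\setminus(s'_i\cap s'_j)\bigr)$ behind the step $\mu(s_i\cap s_j)-\mu(s'_i\cap s'_j)>0$, which the paper leaves implicit.
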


\begin{proof}
(i) $\alpha$-Measure is \emph{hereditary}.
Assume by contradiction that $\mu(s_i \cap s_j) \leq t$ but there is pair of subsets $s'_i \subseteq s_i,s'_j\subseteq s_j$ with $\mu(s'_i \cap s'_j) > t$. Let $S = (s'_i \cap s'_j)$. 
First, $S \neq \emptyset$, otherwise, $\mu(S)=0$ and since $t \geq 0$ there would be a contradiction. 
Second $S \subseteq (s_i \cap s_j)$, thus by the additive property of $\mu$, $\mu(S) \leq \mu(s_i \cap s_j)$. Since $\mu(s_i \cap s_j) \leq t$, the claim holds.

(ii) If $\mu(s_i \cap s_j) > t$ then $|s_i \cap s_j| \geq 1$. This follows because $\mu(\emptyset) = 0$ and $t \geq 0$. 
Let $s'_i \subseteq s_i$ and  $s'_j \subseteq s_j$ be a pair of subsets with $\alpha(s'_i,s'_j)=0$, (i.e., $\mu(s'_i \cap s'_j) \leq t$).
Note that at most one $s'_i =s_i$ or $s'_j =s_j$; otherwise $\alpha(s'_i,s'_j)=1$.
Since $\mu(s_i \cap s_j) > t$, $\mu(s_i \cap s_j) - \mu(s'_i \cap s'_j) > 0$.
In this way, $((s_i \cap s_j) \backslash (s'_i \cap s'_j)) \neq \emptyset$.

(iii) The function $\mu$ is computed in polynomial time; thus, we can verify in constant time whether $\mu(s_i \cap s_j) > t$ or not. \qed
\end{proof}

\vspace{0.15cm}
\noindent\textbf{Metric Overlap.}
Let us assume that $\mathcal{U}$ is a metric space. That is, there is a \emph{metric} or \emph{a distance function} that defines a distance between each pair of elements $u,\,v$ of $\mathcal{U}$, subject to the following conditions: $dist_{\mathcal{U}}(u,v) \geq 0$, $dist_{\mathcal{U}}(u,v)=0$ if ($u=v$), $dist_{\mathcal{U}}(u,v)=dist_{\mathcal{U}}(v,u)$ and $dist_{\mathcal{U}}(u,w) \leq dist_{\mathcal{U}}(u,v)+dist_{\mathcal{U}}(v,w)$. 
For a constant $d_t>0$, we define the function $\alpha$-Metric($s_i,s_j$) which returns \emph{no-conflict} if $|s_i \cap s_j| \leq 1$ or $dist_{\mathcal{U}}(u,v) \leq d_t$ for each pair $u,v$ ($u \neq v$) in $s_i \cap s_j$ else returns \emph{$\alpha$-conflict}.

\begin{lemma}\label{metricLemma}
The function $\alpha$-Metric is well-conditioned.
\end{lemma}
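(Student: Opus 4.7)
The plan is to verify the three clauses of Definition \ref{alphaCondition} for $\alpha$-Metric, imitating the structure of the proofs for $\alpha$-Weight and $\alpha$-Measure. The key observation that powers everything is that for any $s'_i \subseteq s_i$ and $s'_j \subseteq s_j$ we have $s'_i \cap s'_j \subseteq s_i \cap s_j$, and that $\alpha$-Metric depends only on the intersection of its two arguments.

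For hereditariness (i), I would assume $\alpha(s_i, s_j) = 0$, so that either $|s_i \cap s_j| \leq 1$ or all pairs of distinct elements of $s_i \cap s_j$ are at distance at most $d_t$. Since $s'_i \cap s'_j \subseteq s_i \cap s_j$, either the smaller intersection still has cardinality at most $1$, or every pair of its distinct elements is already a pair inside $s_i \cap s_j$, hence still at distance at most $d_t$. Either way $\alpha(s'_i, s'_j) = 0$.

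For (ii), if $\alpha(s_i, s_j) = 1$ then $|s_i \cap s_j| \geq 2 \geq 1$ by definition, because otherwise the first disjunct of the no-conflict rule would apply. For the non-emptiness of $(s_i \cap s_j) \setminus (s'_i \cap s'_j)$, I would argue by contradiction: if this set were empty, then together with $s'_i \cap s'_j \subseteq s_i \cap s_j$ we would have $s'_i \cap s'_j = s_i \cap s_j$; but $\alpha$-Metric is a function of the intersection alone, so $\alpha(s'_i, s'_j) = \alpha(s_i, s_j) = 1$, contradicting $\alpha(s'_i, s'_j) = 0$. Property (iii) is immediate: checking the condition costs at most $O(r^2)$ distance queries, each polynomial in $n$ by the assumption that $dist_{\mathcal{U}}$ is a polynomial-time distance function.

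No step here looks genuinely hard; the only mild subtlety is the second part of (ii), where the earlier proofs for $\alpha$-Weight and $\alpha$-Measure used a strict-inequality/additivity argument that is unavailable for a discrete diameter-style constraint. The clean substitute is the observation that $\alpha$-Metric factors through $s_i \cap s_j$, which turns the required non-emptiness into the trivial statement that equal intersections yield equal $\alpha$-values.
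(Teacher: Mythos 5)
Your proof is correct and follows essentially the same route as the paper's: verify the three clauses using the fact that $s'_i \cap s'_j \subseteq s_i \cap s_j$ and that $\alpha$-Metric is evaluated with the global metric $dist_{\mathcal{U}}$, so it depends only on the intersection. Your handling of the second half of clause (ii) --- equal intersections force equal $\alpha$-values --- is just the contrapositive of the paper's ``some far-apart element must have been removed'' argument, stated a bit more cleanly.
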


\begin{proof}
(i) $\alpha$-Metric is hereditary.
For any pair of sets $s_i$,$s_j$ that do not $\alpha$-conflict (i.e., $\alpha(s_i,s_j)=0$), 
there is no pair of subsets $s'_i \subseteq s_i$, $s'_j\subseteq s_j$ with $\alpha(s'_i,s'_j)=1$. 
Assume the opposite by contradiction.
$|s'_i \cap s'_j| \geq 1$, otherwise $s'_i$ and $s'_j$ would not $\alpha$-conflict.
Observe that $(s'_i \cap s'_j) \subseteq (s_i \cap s_j)$. 
In addition, since $\mathcal{U}$ is a metric-space and $\alpha(s_i,s_j)=0$,
$dist_{\mathcal{U}}(u,v) \leq t$ for each pair $u,v$ ($u \neq v$) in $(s_i \cap s_j)$.
Given that we are using $dist_{\mathcal{U}}$ and not $dist_{s'_i \cap s'_j}$, there is no pair of elements in $(s'_i \cap s'_j)$ with $dist_{\mathcal{U}}(u,v) > t$.

(ii) Since $d_t>0$, for any pair $s_i,s_j$ with $\alpha(s_i,s_j)=1$, $|s_i \cap s_j| > 1$. 
Let $s'_i \subseteq s_i$ and $s'_j \subseteq s_j$ be a pair of subsets with $\alpha({s'_i},{s'_j})=0$.
Note that at most one $s'_i =s_i$ or $s'_j =s_j$; otherwise $\alpha(s'_i,s'_j)=1$.
$dist_{\mathcal{U}}(u,v) \leq t$ for each pair $u,v$ ($u \neq v$) in $(s'_i \cap s'_j)$.
Since $\alpha(s_i,s_j)=1$ but $\alpha(s'_i,s'_j)=0$ then it must exists at least one element $u$ in
$((s_i \cap s_j) \backslash (s'_i \cap s'_j))$ such that $dist_{\mathcal{U}}(u,v) > t$ for some $v$ in $(s_i \cap s_j)$.
In this way, $((s_i \cap s_j) \backslash (s'_i \cap s'_j)) \neq \emptyset$.

(iii) Assuming as input a metric space $\mathcal{U}$, $\alpha$-Metric is verified in $O(r^2)$ time.
\qed
\end{proof}

\subsection{Restricting the Overlap Between Subgraphs}\label{alphaGraphs}
\noindent\textbf{Prescribed Pattern.}
It has been observed in social networks that the overlap region is often more densely connected than the rest of the community \cite{Yang14}. 
Inspired by this, we will allow pairwise-overlap in a $(k,\alpha())$-$\Pi$-packing if the overlap region has a specific pattern, for example, it's a clique.
More precisely, we say that a pair of subgraphs $H_i$,$H_j$ do not $\alpha$-conflict if $G[V(H_i) \cap V(H_j)]$ is isomorphic to a graph $F$ in a class $\mathcal{F}$.
To define a well-conditioned $\alpha()$, $\mathcal{F}$ is a graph class that is hereditary (i.e., it is closed under taking induced subgraphs). 
To preserve our FPT results, any graph in $\mathcal{F}$ should be polynomial time verifiable. 
Examples of $\mathcal{F}$ are cliques, planar and chordal graphs. Indeed this applies to any
graph class that is closed under minors, since this is hereditary and by the Robertson-Seymour theorem the graph is polynomially testable by checking for the forbidden minors \cite{Robertson04}.
We define the function $\alpha$-Pattern($s_i,s_j$) that returns \emph{no-conflict} if $|s_i \cap s_j|=0$ or if $G[s_i \cap s_j]$ is isomorphic to a graph $F$ in $\mathcal{F}$; otherwise, it returns \emph{$\alpha$-conflict}.

\begin{lemma}\label{patternProperties}
The function $\alpha$-Pattern is well-conditioned.
\end{lemma}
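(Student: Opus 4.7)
The plan is to verify the three requirements of Definition \ref{alphaCondition} in turn, with the main leverage coming from the fact that $\mathcal{F}$ is hereditary and that for subsets $s'_i \subseteq s_i$, $s'_j \subseteq s_j$ the intersection satisfies $(s'_i \cap s'_j) \subseteq (s_i \cap s_j)$, so $G[s'_i \cap s'_j]$ is an induced subgraph of $G[s_i \cap s_j]$.

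For hereditariness (i), I would start from the assumption $\alpha(s_i,s_j)=0$ and split into two cases. If $|s_i \cap s_j| = 0$, then trivially $|s'_i \cap s'_j| = 0$ and $\alpha(s'_i, s'_j) = 0$. Otherwise $G[s_i \cap s_j]$ is isomorphic to some $F \in \mathcal{F}$; because $\mathcal{F}$ is closed under induced subgraphs and $G[s'_i \cap s'_j]$ is induced in $G[s_i \cap s_j]$, the overlap $G[s'_i \cap s'_j]$ is also isomorphic to a member of $\mathcal{F}$, yielding $\alpha(s'_i,s'_j)=0$.

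For the intersection property (ii), I would first observe that if $\alpha(s_i,s_j)=1$ then necessarily $|s_i \cap s_j| \geq 1$, since an empty overlap is classified as no-conflict by definition. The key step is showing that $(s_i \cap s_j) \setminus (s'_i \cap s'_j)$ is non-empty whenever $\alpha(s'_i,s'_j)=0$. Here I would argue by contradiction: if this difference were empty, then together with the automatic inclusion $(s'_i \cap s'_j) \subseteq (s_i \cap s_j)$ we would obtain $s_i \cap s_j = s'_i \cap s'_j$, hence $G[s_i \cap s_j] = G[s'_i \cap s'_j]$, which would force $\alpha(s_i,s_j) = \alpha(s'_i,s'_j)$, contradicting the assumption that the values differ. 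Finally (iii) follows since, by hypothesis, membership in $\mathcal{F}$ is decidable in polynomial time, and the overlap has at most $r$ vertices, so the isomorphism test (or a single minor check, in the Robertson--Seymour case) runs in time polynomial in $n$.

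The only subtle step is (ii), because the definition phrases the hereditary condition asymmetrically (only $0$-conflict propagates down), so we cannot directly apply (i) in the reverse direction. The contradiction argument based on equality of the two intersections sidesteps this issue cleanly, and no further case analysis on the structure of $\mathcal{F}$ is needed.
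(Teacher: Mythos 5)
Your proposal is correct and follows essentially the same route as the paper: verify the three conditions of Definition~\ref{alphaCondition}, using heredity of $\mathcal{F}$ together with $(s'_i \cap s'_j) \subseteq (s_i \cap s_j)$ for condition (i), and the observation that equal intersections would force equal $\alpha$-values for condition (ii). Your phrasing of (ii) via the explicit equality $s_i \cap s_j = s'_i \cap s'_j$ is in fact slightly cleaner than the paper's, which nominally invokes closure under induced subgraphs at that step, but the underlying argument is the same.
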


\begin{proof}
(i) $\alpha$-Pattern is hereditary.
Assume by contradiction that there is pair $s_i$,$s_j$ with $\alpha(s_i,s_j)=0$ but there is a pair of subsets $s'_i \subseteq s_i$ and $s'_j \subseteq s_j$ with $\alpha(s'_i,s'_j)=1$.
If $\alpha(s'_i,s'_j)=1$ this implies that $G[s'_i \cap s'_j]$ is not isomorphic to a graph $F$ in $\mathcal{F}$.
Notice that $(s'_i \cap s'_j) \subseteq (s_i \cap s_j)$.
In addition, $(G[s_i \cap s_j])$ is isomorphic to a graph $F \in \mathcal{F}$ (otherwise, $\alpha(s_i,s_j)=1$).
Since $\mathcal{F}$ is a graph class that is hereditary, $G[s'_i \cap s'_j]$ is also isomorphic to $F$ and $s'_i$ and $s'_j$ do not $\alpha$-conflict.

(ii) It follows by definition of $\alpha$-Pattern($s_i,s_j$) that for any pair $s_i,s_j$ that $\alpha$-conflict (i.e., $\alpha(s_i,s_j)=1$) $|s_i \cap s_j| \geq 1$.
Let $s'_i \subseteq s_i$ and  $s'_j \subseteq s_j$ be a pair of non-empty subsets where $\alpha(s'_i,s'_j)=0$.
This implies that $G[s'_i \cap s'_j]$ is isomorphic to a graph $F \in \mathcal{F}$.
Recall that $(s'_i \cap s'_j) \subseteq (s_i \cap s_j)$.
Since $G[s_i \cap s_j]$ is not isomorphic to a graph in $\mathcal{F}$ but $G[s'_i \cap s'_j]$ is, and $\mathcal{F}$ is closed under taking induced subgraphs, $((s_i \cap s_j) \backslash (s'_i \cap s'_j)) \neq \emptyset$. 

(iii) Finally, $\alpha$-Pattern is computed in polynomial time as it is a constraint of the class $\mathcal{F}$.
\qed
\end{proof}

\vspace{0.15cm}
\noindent\textbf{Distance.}
In \cite{Moustafa09}, overlapping nodes occur only in the \emph{boundary} regions of overlapping communities in sensor networks. 
Motivated by this, we consider in the overlap region nodes that are ``closer'' to each other. In this way, two subgraphs $H_i$,$H_j$ do not $\alpha$-conflict if the distance in $G$ between any pair of vertices $u,v$ in $V(H_i) \cap V(H_j)$ is at most a constant $d_t > 0$, i.e., $dist_{G}(u,v)$ $\leq$ $d_t$. Recall that a subgraph $H_i$ is represented by a set $S_i=V(H_i)$ in $\mathcal{S}$. 
Since the graph distance is a metric on $V(G)$, we use the function $\alpha$-Metric defined previously (Lemma \ref{metricLemma}). 

\begin{lemma}
The function $\alpha$-Distance is an $\alpha$-Condition.
\end{lemma}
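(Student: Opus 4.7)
The plan is to observe that $\alpha$-Distance is a direct instantiation of $\alpha$-Metric with universe $\mathcal{U} = V(G)$ and distance function $dist_{\mathcal{U}} = dist_G$, and then invoke Lemma \ref{metricLemma}. The main tasks are to certify that $(V(G), dist_G)$ is indeed a metric space and to confirm that the polynomial-time computability condition transfers to the graph setting. There is no real combinatorial obstacle here; the lemma is essentially a corollary.

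First, I would verify the metric axioms for $dist_G$ on $V(G)$. Non-negativity is immediate because shortest-path lengths are non-negative integers; $dist_G(u,v) = 0$ iff $u = v$ since edges have unit length in a simple undirected graph; symmetry follows from $G$ being undirected (any $u$-$v$ path reverses to a $v$-$u$ path of the same length); and the triangle inequality $dist_G(u,w) \leq dist_G(u,v) + dist_G(v,w)$ follows from concatenating shortest paths. For vertices $u, v$ in distinct connected components, I adopt the extended-reals convention $dist_G(u,v) = \infty$, which preserves all four axioms and simply forces $\alpha$-Distance to return \emph{$\alpha$-conflict} whenever the overlap meets two such vertices (since $d_t$ is a finite constant).

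Next, I would instantiate Lemma \ref{metricLemma} with $\mathcal{U} = V(G)$, distance $dist_G$, and threshold $d_t$. Properties (i) hereditary and (ii) that $\alpha$-conflict implies a nonempty set of conflicting elements follow verbatim from that proof, since the arguments rely only on the abstract metric structure, on the monotonicity $(s'_i \cap s'_j) \subseteq (s_i \cap s_j)$, and on the fact that if the larger intersection contains a pair of vertices at distance greater than $d_t$ but the smaller one does not, then at least one such vertex must lie in $(s_i \cap s_j)\setminus(s'_i \cap s'_j)$.

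Finally, for property (iii), I would note that $dist_G(u,v)$ is computable in $O(n+m)$ time by breadth-first search from $u$, so $\alpha$-Distance$(s_i, s_j)$ is evaluated by running BFS from each vertex in $s_i \cap s_j$ (or precomputing all-pairs distances once) and testing the at most $\binom{r}{2}$ pairs in the overlap; this is polynomial in $n$ as required. Combining these three facts with Lemma \ref{metricLemma} yields that $\alpha$-Distance is well-conditioned.
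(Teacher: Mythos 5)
Your proposal is correct and takes essentially the same route as the paper: the paper also observes that the graph distance is a metric on $V(G)$ and derives the lemma directly as an instantiation of $\alpha$-Metric via Lemma~\ref{metricLemma}, giving no further proof. Your additional care with the disconnected case (the $dist_G(u,v)=\infty$ convention) and the explicit BFS-based verification of condition (iii) only fill in details the paper leaves implicit.
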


Note that we are using $dist_{G}(u,v) \leq d_t$ instead of $dist_{G[S_i \cap S_j]}(u,v) \leq d_t$. The second one is not an hereditary property and thus not well-conditioned. 

\vspace{0.15cm}
\noindent\textbf{Property.}
There are several vertex properties that are relevant to the analysis of real networks: vertex strength \cite{Chen10,Miritello13}, vertex weight \cite{Li13}, and disparity \cite{Miritello13}, among others. Hence, we suggest considering only overlapping nodes that present the same property $\xi$ (or properties). We assume however that the properties values for each vertex are given as part of the input. We define $\alpha$-Property($s_i,s_j$) which simply returns \emph{no-conflict} either if $|s_i \cap s_j| = 0$ or if each element $u$ in $\{s_i \cap s_j\}$ satisfies $\xi$. Otherwise, it returns \emph{$\alpha$-conflict}. 

\begin{lemma}
The function $\alpha$-Property is well-conditioned.
\end{lemma}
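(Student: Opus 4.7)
The plan is to verify the three requirements of Definition~\ref{alphaCondition} directly, in the same style already used for $\alpha$-Weight and $\alpha$-Pattern. The single structural fact that drives everything is the monotonicity of intersections: for any $s'_i \subseteq s_i$ and $s'_j \subseteq s_j$, $(s'_i \cap s'_j) \subseteq (s_i \cap s_j)$. Since $\xi$ is a property of individual elements, satisfying $\xi$ everywhere on a set is inherited by every subset.

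For the hereditary condition (i), I would start from $\alpha(s_i,s_j)=0$ and split into the two cases in the definition. If $s_i \cap s_j = \emptyset$, then $s'_i \cap s'_j = \emptyset$ as well, so $\alpha(s'_i,s'_j)=0$. If $s_i \cap s_j \neq \emptyset$ and every element of $s_i \cap s_j$ satisfies $\xi$, then every element of $(s'_i \cap s'_j) \subseteq (s_i \cap s_j)$ also satisfies $\xi$ (or $s'_i \cap s'_j$ is empty), and again $\alpha(s'_i,s'_j)=0$.

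For condition (ii), note that $\alpha(s_i,s_j)=1$ immediately forces $|s_i \cap s_j| \geq 1$, since the function returns \emph{no-conflict} whenever the intersection is empty. Moreover, there must exist a witness $u^\star \in s_i \cap s_j$ that fails $\xi$. Now for any $s'_i \subseteq s_i$, $s'_j \subseteq s_j$ with $\alpha(s'_i,s'_j)=0$, every element of $s'_i \cap s'_j$ satisfies $\xi$; hence $u^\star \notin s'_i \cap s'_j$, so $u^\star \in (s_i \cap s_j) \setminus (s'_i \cap s'_j)$, and the latter set is nonempty as required.

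Condition (iii) is the easiest: computing $s_i \cap s_j$ takes $O(r)$ time and, since the property values of all vertices are given as part of the input, checking $\xi$ at each of the at most $r$ elements of the intersection runs in polynomial time in $n$. No genuine obstacle arises here; the only point that needs emphasis is the input assumption that $\xi$-values are precomputed, so that the polynomial-time bound in (iii) is inherited rather than imposed as an additional hypothesis on $\xi$.
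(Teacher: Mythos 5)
Your proposal is correct and follows essentially the same route as the paper's proof: all three conditions are verified directly from the monotonicity of intersections under taking subsets. Your argument for condition (ii) is in fact slightly tighter than the paper's, since you exhibit an explicit witness $u^\star \in s_i \cap s_j$ failing $\xi$ and show it must lie in $(s_i \cap s_j) \setminus (s'_i \cap s'_j)$, whereas the paper asserts the existence of such an element in the difference without naming it first.
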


\begin{proof}
(i) $\alpha$ is \emph{hereditary}.
Assume by contradiction that there is pair $s_i$,$s_j$ with $\alpha(s_i,s_j)=0$ 
but there is a pair of subsets $\alpha(s'_i,s'_j)=1$ where $s'_i \subseteq s_i$ and $s'_j \subseteq s_j$. 
If $\alpha(s_i,s_j)=0$ every element in $s_i \cap s_j$ satisfies the property $\xi$. 
Since $(s'_i \cap s'_j) \subseteq (s_i \cap s_j)$, every element in $s'_i \cap s'_j$ satisfies $\xi$ as well.

(ii) By definition of $\alpha$-Property any pair of disjoint sets do not $\alpha$-conflict.
Let $s'_i \subseteq s_i$ and $s'_j \subseteq s_j$ be a pair of subsets with $\alpha({s'_i},{s'_j})=0$.
If $\alpha(s_i,s_j)=1$ but $\alpha({s'_i},{s'_j})=0$ there must exist at least one element in
$((s_i \cap s_j) \backslash (s'_i \cap s'_j))$ that does not follow $\xi$ and therefore
$((s_i \cap s_j) \backslash (s'_i \cap s'_j)) \neq \emptyset$.

(iii) The property $\xi$ for each element of $\mathcal{U}$ is given as part of the input. Thus, we can verify in constant time whether $s_i$, $s_j$ $\alpha$-conflict or not.
\qed
\end{proof}

\vspace{0.15cm}
\noindent \textbf{Dense Overlap.}
We design another $\alpha$ function to model the behavior that the overlap region is densely connected. To that end, we define $\alpha$-DenseOverlap($s_i,s_j$) that returns \emph{no-conflict} if $|s_i \cap s_j|=0$ or $|E(G[s_i \cap s_j])| \geq \frac{O(O-1)}{2} -c$, where $O=|s_i \cap s_j|$ and $c \geq 0$ is a constant; otherwise, it returns \emph{$\alpha$-conflict}. 

\begin{lemma}
The function $\alpha$-DenseOverlap is well-conditioned.
\end{lemma}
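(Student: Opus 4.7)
The plan is to verify each of the three conditions of Definition \ref{alphaCondition} in turn, exploiting a single key structural observation: whenever $s'_i \cap s'_j \subseteq s_i \cap s_j$, every non-edge of $G[s'_i \cap s'_j]$ is also a non-edge of $G[s_i \cap s_j]$. Setting $M(s) = \binom{|s|}{2} - |E(G[s])|$ for the number of missing edges in the induced subgraph, this says $M(s'_i \cap s'_j) \leq M(s_i \cap s_j)$ for any subsets. This monotonicity of missing edges under subset inclusion is what drives both the hereditariness argument and the ``witness element'' argument for property (ii).

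For condition (i), I would assume $\alpha(s_i, s_j) = 0$, so either $s_i \cap s_j = \emptyset$ or $M(s_i \cap s_j) \leq c$. In the first case, any subset overlap is empty and thus trivially no-conflict. In the second case, for any $s'_i \subseteq s_i$ and $s'_j \subseteq s_j$ with $O' = |s'_i \cap s'_j|$, the observation yields $M(s'_i \cap s'_j) \leq M(s_i \cap s_j) \leq c$, which rearranges exactly to $|E(G[s'_i \cap s'_j])| \geq \tfrac{O'(O'-1)}{2} - c$, giving $\alpha(s'_i, s'_j) = 0$.

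For condition (ii), if $\alpha(s_i, s_j) = 1$ then the first disjunct of $\alpha$-DenseOverlap fails, so $|s_i \cap s_j| \geq 1$. For any $s'_i \subseteq s_i$, $s'_j \subseteq s_j$ with $\alpha(s'_i, s'_j) = 0$, I would suppose for contradiction that $(s_i \cap s_j) \setminus (s'_i \cap s'_j) = \emptyset$; combined with $s'_i \cap s'_j \subseteq s_i \cap s_j$ this forces equality of the two overlaps and hence $G[s'_i \cap s'_j] = G[s_i \cap s_j]$, so $\alpha(s'_i, s'_j) = \alpha(s_i, s_j) = 1$, a contradiction. For condition (iii), given two subsets of size at most $r$, one computes the intersection, counts induced edges by querying $E(G)$ on each pair, and compares with $\binom{O}{2} - c$, all in $O(r^2)$ time. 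The main obstacle is really just the first condition: one needs to notice that the slack parameter $c$ in the density bound is preserved (rather than tightened) when restricting to a smaller vertex subset, since missing edges can only disappear, never appear, under taking induced subgraphs.
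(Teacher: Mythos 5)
Your proof is correct and follows essentially the same route as the paper's: both verify the three conditions of Definition \ref{alphaCondition} using the fact that a non-edge of $G[s'_i \cap s'_j]$ remains a non-edge of $G[s_i \cap s_j]$, so the number of missing edges can only grow when passing to the larger overlap. Your explicit ``missing edges'' functional $M$ just makes cleaner what the paper argues by contradiction, and your handling of condition (ii) via equality of the overlaps matches the paper's argument.
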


\begin{proof}
(i) $\alpha$-DenseOverlap is hereditary.
Assume by contradiction that there is pair of sets $s_i,s_j$ with $\alpha(s_i,s_j)=0$ but there is a pair of subsets $s'_i \subseteq s_i$ and $s'_j \subseteq s_j$ with $\alpha(s'_i,s'_j)=1$.
Notice that $(s'_i \cap s'_j) \neq \emptyset$; otherwise, $\alpha(s'_i,s'_j)=0$
Therefore, if $\alpha(s'_i,s'_j)=1$ then $|E(G[s'_i \cap s'_j])| < \frac{O'(O'-1)}{2} - c$, where $O'=|s'_i \cap s'_j|$.
However since $G[s'_i \cap s'_j]$ is an induced subgraph of $G[s_i \cap s_j]$, then $|E(G[s_i \cap s_j])| < \frac{O(O-1)}{2} -c$, a contradiction.

(ii) If $\alpha(s_i,s_j)=1$, $|s_i \cap s_j| \geq 1$. Furthermore, for any pair of subsets $s'_i \subseteq s_i$ and $s'_j \subseteq s_j$ with $\alpha(s'_i,s'_j)=0$ $(s_i \cap s_j) \backslash (s'_i \cap s'_j) \neq \emptyset$. 
Assume otherwise by contradiction.  
If $\alpha(s'_i,s'_j)=0$ then $|E(G[s'_i \cap s'_j])| \geq \frac{O'(O'-1)}{2} - c$, where $O'=|s'_i \cap s'_j|$.
Thus, if $(s_i \cap s_j) \backslash (s'_i \cap s'_j)$ would be the empty set, then $|E(G[s_i \cap s_j])| \geq \frac{O(O-1)}{2} - c$, where $O=|s_i \cap s_j|$, a contradiction to our assumption that $\alpha(s_i,s_j)=1$.

(iii) We can verify in polynomial time this condition.
\end{proof}

\vspace{0.15cm}
\noindent \textbf{Density.}
We could ask that the subgraph induced by the overlapping vertices has both at most $t$ vertices and $c$ edges. To that end, the function $\alpha$-Density returns \emph{no-conflict} if $|s_i \cap s_j|=0$ or ($|s_i \cap s_j| \leq {t}$ and $|E(G[s_i \cap s_j])| \leq {c}$), where $c \geq 0$, else returns $\alpha$-conflict.

\begin{lemma}
The function $\alpha$-Density is well-conditioned.
\end{lemma}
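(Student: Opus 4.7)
The plan is to verify the three requirements of Definition \ref{alphaCondition} in the same style as the preceding lemmas for $\alpha$-Weight, $\alpha$-Pattern, and $\alpha$-DenseOverlap. The common mechanism I will exploit throughout is the monotonicity $(s'_i \cap s'_j) \subseteq (s_i \cap s_j)$ whenever $s'_i \subseteq s_i$ and $s'_j \subseteq s_j$, which yields both $|s'_i \cap s'_j| \leq |s_i \cap s_j|$ and, because $G[s'_i \cap s'_j]$ is an induced subgraph of $G[s_i \cap s_j]$, the edge-count bound $|E(G[s'_i \cap s'_j])| \leq |E(G[s_i \cap s_j])|$.

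First, for property (i), I will assume $\alpha(s_i,s_j)=0$ and split on whether $|s_i \cap s_j|=0$. In the trivial case, every subset intersection is also empty. Otherwise, $|s_i \cap s_j| \leq t$ and $|E(G[s_i \cap s_j])| \leq c$, so the two monotonicity inequalities above immediately transfer both bounds to $s'_i \cap s'_j$, giving $\alpha(s'_i,s'_j)=0$. This is routine.

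Second, for property (ii), I will argue by contrapositive for the non-empty overlap claim: if $|s_i \cap s_j|=0$, the definition of $\alpha$-Density forces no-conflict, hence any conflicting pair must satisfy $|s_i \cap s_j| \geq 1$. For the harder half, I will take any $s'_i \subseteq s_i$, $s'_j \subseteq s_j$ with $\alpha(s'_i,s'_j)=0$ and suppose by contradiction that $(s_i \cap s_j) \setminus (s'_i \cap s'_j) = \emptyset$. Then the containment $(s'_i \cap s'_j) \subseteq (s_i \cap s_j)$ collapses to equality, so the induced subgraphs on these vertex sets coincide. Consequently the size and edge-count bounds witnessing $\alpha(s'_i,s'_j)=0$ apply verbatim to $s_i \cap s_j$, forcing $\alpha(s_i,s_j)=0$ and contradicting the hypothesis.

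Third, for property (iii), verifying $\alpha$-Density amounts to computing the intersection in $O(r)$ time and then counting edges in the induced subgraph in $O(r^2)$ time, well within polynomial in $n$. The only step I anticipate requiring any care is the equality-of-intersections argument in (ii): one has to notice that the assumption $(s_i \cap s_j) \setminus (s'_i \cap s'_j) = \emptyset$ combined with the automatic reverse containment forces the induced subgraphs to be identical, so that both the vertex bound $t$ and the edge bound $c$ automatically transfer from $s'_i \cap s'_j$ back up to $s_i \cap s_j$. Once that observation is made explicit, the contradiction with $\alpha(s_i,s_j)=1$ is immediate and the proof mirrors exactly the pattern already used for $\alpha$-DenseOverlap.
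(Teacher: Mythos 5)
Your proof is correct and follows essentially the same route as the paper's: hereditariness via the monotonicity of $|s_i \cap s_j|$ and $|E(G[s_i \cap s_j])|$ under taking subsets, and property (ii) via the observation that an empty difference $(s_i \cap s_j) \setminus (s'_i \cap s'_j)$ would force the two intersections (and hence their induced subgraphs) to coincide. The only cosmetic difference is that you derive the contradiction in (ii) in one unified step from set equality, whereas the paper splits into cases according to whether the vertex bound $t$ or the edge bound $c$ is violated; both are the same argument.
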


\begin{proof}
(i) $\alpha$-Density is \emph{hereditary}.
Assume by contradiction that there is pair $s_i$,$s_j$ with $\alpha(s_i,s_j)=0$,
but there is a pair of subsets $\alpha(s'_i,s'_j)=1$ where $s'_i \subseteq s_i$ and $s'_j \subseteq s_j$. 
Since $\alpha(s_i,s_j)=0$, both $(s_i \cap s_j) \leq t$ and $E(G[S_i \cap s_j])| \leq c$.
For any pair of sets $s_i$,$s_j$ with $\alpha(s_i,s_j)=0$, $|s_i \cap s_j| \leq t$. 
Thus, there cannot be a pair of subsets $s'_i \subseteq s_i,s'_j\subseteq s_j$ with $|s'_i \cap s'_j|>t$.
Given that $(s'_i \cap s'_j) \subseteq (s_i \cap s_j)$, $|E(G[s'_i \cap s'_j])| \leq |E({G[s_i \cap s_j]})| \leq c$.

(ii) For any pair $s_i,s_j$ that $\alpha$-conflict (i.e., $\alpha(s_i,s_j)=1$) $|s_i \cap s_j| \geq 1$.
In addition, for any pair of subsets $s'_i \subseteq s_i$ and  $s'_j \subseteq s_j$ with $\alpha(s'_i,s'_j)=0$ both $|s'_i \cap s'_j| \leq t$ and $E({G[s'_i \cap s'_j]})| \leq c$.
Since $\alpha(s_i,s_j)=1$ either $|s_i \cap s_j|>t$ or $|E({G[s_i \cap s_j]})| > {c}$.
In the first case,  since $\alpha(s'_i,s'_j)=0$, $|s'_i \cap s'_j| \leq t$. Therefore, $((s_i \cap s_j) \backslash (s'_i \cap s'_j)) \neq \emptyset$.
For the second case, $|E({G[s_i \cap s_j]})| - |E({G[s'_i \cap s'_j]})| > 0$.
Therefore, $|(s_i \cap s_j) \backslash (s'_i \cap s'_j)| \geq 1$.

(iii) In $O(r) $ time, we can verify if $|E({G[s_i \cap s_j]})| \leq {c}$.
\qed
\end{proof}

\section{Predetermined Cluster Heads}\label{PCHSection}
The problem of discovering communities in networks has been tackled with clustering algorithms as well \cite{Celebi15}. 
Many of these algorithms consider as part of the input a collection of sets of vertices $\mathcal{C}=\{C_1,\dots,C_l\}$ where each set $C_i \subset V(G)$ is called a \emph{cluster head}. 
The objective is to find a set of communities in $G$ where each community contains exactly one cluster head. In addition, communities should not share members of the cluster heads \cite{Tong14,Chen10,Li13,Cui13,Moustafa09}. 

Motivated by this, we introduce the  PCH-$r$-Set Packing with $\alpha()$-Overlap problem, where PCH stands for Predetermined Clusters Heads. The input of this problem is as before a universe $\mathcal{U}$, a collection $\mathcal{S}$, an integer $k$, but now it also has a collection of sets $\mathcal{C}=\{C_1, \dots C_l\}$ where $C_i \subset \mathcal{U}$. The goal there is to find  a $(k,\alpha())$-set packing (PCH), i.e., a set of at least $k$ sets $\mathcal{K}=\{S_1, \dots ,S_k\}$ subject to the following conditions: each $S_i$ contains at least one set of $\mathcal{C}$; for any pair $S_i$, $S_j$ with $i\neq j$, $(S_i \cap S_j) \cap val(\mathcal{C}) = \emptyset $, and $S_i,S_j$ do not $\alpha$-conflict. Recall that a $\Pi$-subgraph (or a community) is represented by a set in $\mathcal{S}$ (Section \ref{GraphSection}). Thus, this problem translates into a PCH variation for our $\Pi$-Packing problem as well. 

To solve the $r$-Set Packing with $\alpha$-Overlap problem (PCH), we need to do two modifications to the \texttt{BST-algorithm} described in Section \ref{GenericSection}.

First, we redefine the routine that creates the children of the root of the search tree, and we call it \texttt{Initialization (PCH)}.
By Lemma \ref{intersectionLemma}, a maximal solution $\mathcal{M}$ is used to determine the children of the root. 
In the (PCH)-variation, we no longer compute $\mathcal{M}$ but rather we use $\mathcal{C}$ to compute those children.
That is, the root will have a child $i$ for each possible combination of $\binom{\mathcal{C}}{k}$. 
Recall that a node $i$ has a collection $\mathbf{Q^i} = \{s^i_1,\dots,s^i_k\}$.
Each set of $\mathbf{Q^i}$ is initialized with set of that combination.

\begin{lemma}
If there exists at least one $(k,\alpha())$-set packing (PCH) of $\mathcal{S}$, at least one of the children of the root will have a partial-solution.
\end{lemma}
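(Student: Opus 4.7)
The plan is to mimic the structure of Lemma~\ref{InitializationLemma} but replace the role of $\mathcal{M}$ (and its replicated copies $\mathcal{M}_k$) by the collection $\mathcal{C}$ of predetermined cluster heads. The first step is to let $\mathcal{K}=\{S^*_1,\ldots,S^*_k\}$ be any $(k,\alpha())$-set packing (PCH). By the PCH condition, for each $j$ there exists some cluster head $C_{\sigma(j)}\in\mathcal{C}$ with $C_{\sigma(j)}\subseteq S^*_j$, giving a map $\sigma:\{1,\ldots,k\}\to\{1,\ldots,|\mathcal{C}|\}$.

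The crucial observation I would establish next is that the cluster heads $C_{\sigma(1)},\ldots,C_{\sigma(k)}$ are pairwise distinct. Indeed, if $\sigma(i)=\sigma(j)$ for some $i\neq j$, then $C_{\sigma(i)}\subseteq S^*_i\cap S^*_j$, so $(S^*_i\cap S^*_j)\cap val(\mathcal{C})\supseteq C_{\sigma(i)}\neq\emptyset$, contradicting the PCH requirement that $(S^*_i\cap S^*_j)\cap val(\mathcal{C})=\emptyset$ (assuming non-empty cluster heads; the empty-cluster-head case can be treated trivially, or excluded as part of the input convention).

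Since $\{C_{\sigma(1)},\ldots,C_{\sigma(k)}\}$ is then a genuine $k$-subset of $\mathcal{C}$, it is one of the $\binom{|\mathcal{C}|}{k}$ combinations enumerated by \texttt{Initialization (PCH)}. Let $i$ be the child of the root associated to this combination. After labelling the elements of $\mathcal{K}$ so that $s^i_j$ is initialized with $C_{\sigma(j)}$ (which is possible because $\mathcal{K}$ is unordered, so we may match indices), we have $s^i_j=C_{\sigma(j)}\subseteq S^*_j$ for every $1\le j\le k$. By the definition of a partial-solution, $\mathbf{Q^i}$ is therefore a partial-solution of $\mathcal{K}$, and the lemma follows.

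The only genuinely non-routine point is the distinctness argument in the second step; once this is in hand the rest is a direct translation of Lemma~\ref{InitializationLemma}, and in fact the proof is slightly simpler than in the generic case since we no longer need to replicate elements (each cluster head appears in at most one set of any valid solution, rather than up to $k$ times as with individual elements of $val(\mathcal{M})$).
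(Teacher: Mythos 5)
Your proof is correct and follows essentially the same route as the paper: the solution's cluster heads form one of the $\binom{|\mathcal{C}|}{k}$ selections enumerated by \texttt{Initialization (PCH)}, so the corresponding child's $\mathbf{Q^i}$ is a partial-solution. Your distinctness argument (that the PCH requirement $(S^*_i \cap S^*_j)\cap val(\mathcal{C})=\emptyset$ forces the $k$ chosen cluster heads to be pairwise distinct, so plain combinations suffice without the $k$-fold replication used for $\mathcal{M}_k$ in the generic case) is a worthwhile detail that the paper's two-line proof leaves implicit.
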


\begin{proof}
It follows by the explicit condition that each set in a $(k,\alpha())$-set packing (PCH) should contain at least one set from $\mathcal{C}$ and because the routine \texttt{Initialization (PCH)} tries all possible selections of size $k$ from $\mathcal{C}$ to create the children of the root.
\end{proof}

Second, we redefine the $\alpha$ function of the \texttt{BST-algorithm} as $\alpha$-PCH. 
This new function returns \emph{$\alpha$-conflict} if $((s_i \cap s_j) \cap val(\mathcal{C})) \neq \emptyset$; otherwise executes the original $\alpha()$ function and returns $\alpha$($s_i,s_j$).

\begin{lemma}
If the function $\alpha()$ is well conditioned, the function $\alpha$-PCH is also well-conditioned. 
\end{lemma}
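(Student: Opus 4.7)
The plan is to verify the three requirements of Definition \ref{alphaCondition} directly for $\alpha$-PCH, reducing each one to the corresponding property of the original well-conditioned $\alpha()$ plus a simple monotonicity argument for the $val(\mathcal{C})$ component. Observe the key fact that for any subsets $s'_i \subseteq s_i$ and $s'_j \subseteq s_j$ we have $(s'_i \cap s'_j) \subseteq (s_i \cap s_j)$, so every statement about how the intersection shrinks under taking subsets is automatic.

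For hereditariness (i), I would suppose $\alpha\text{-PCH}(s_i,s_j)=0$. By the definition of $\alpha$-PCH this means both $(s_i \cap s_j) \cap val(\mathcal{C}) = \emptyset$ and $\alpha(s_i,s_j)=0$. For any $s'_i \subseteq s_i$, $s'_j \subseteq s_j$, the inclusion $(s'_i \cap s'_j) \cap val(\mathcal{C}) \subseteq (s_i \cap s_j) \cap val(\mathcal{C}) = \emptyset$ disposes of the cluster-head part, and $\alpha(s'_i,s'_j)=0$ follows from the hereditariness of $\alpha()$. Hence $\alpha\text{-PCH}(s'_i,s'_j)=0$.

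For the second condition (ii), assume $\alpha\text{-PCH}(s_i,s_j)=1$. Then at least one of the two triggering clauses holds. If $(s_i \cap s_j) \cap val(\mathcal{C}) \neq \emptyset$, then trivially $|s_i \cap s_j|\ge 1$; otherwise $\alpha(s_i,s_j)=1$ and the bound $|s_i \cap s_j|\ge 1$ comes from property (ii) of $\alpha()$. For the ``witness'' part, fix any $s'_i \subseteq s_i$, $s'_j \subseteq s_j$ with $\alpha\text{-PCH}(s'_i,s'_j)=0$. I split into the same two cases: in the cluster-head case there is an element $u \in (s_i \cap s_j) \cap val(\mathcal{C})$, and since $(s'_i \cap s'_j) \cap val(\mathcal{C}) = \emptyset$ we have $u \notin (s'_i \cap s'_j)$, giving $((s_i \cap s_j) \setminus (s'_i \cap s'_j)) \neq \emptyset$; in the other case $\alpha(s_i,s_j)=1$ and $\alpha(s'_i,s'_j)=0$, so property (ii) of the original $\alpha()$ yields the same conclusion directly.

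Finally, (iii) is immediate: testing $(s_i \cap s_j) \cap val(\mathcal{C}) \neq \emptyset$ takes $O(r\,|val(\mathcal{C})|)$ time, and $\alpha(s_i,s_j)$ is polynomial by property (iii) for $\alpha()$, so $\alpha$-PCH is computable in polynomial time in $n$. None of the steps is really an obstacle; the only mild subtlety is ensuring the two triggering clauses of $\alpha$-PCH are handled separately in (ii), because the ``conflicting element'' witness can come either from $val(\mathcal{C})$ or from the $\alpha$-conflict itself, and these must be argued independently.
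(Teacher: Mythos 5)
Your proof is correct and follows essentially the same route as the paper: it verifies the three conditions of Definition~\ref{alphaCondition} for $\alpha$-PCH by combining the corresponding property of the original $\alpha()$ with the monotonicity of $(s'_i \cap s'_j) \cap val(\mathcal{C}) \subseteq (s_i \cap s_j) \cap val(\mathcal{C})$. If anything, your treatment of the witness part of condition~(ii) is slightly more complete than the paper's, since you explicitly handle the case where the conflict is triggered by $\alpha(s_i,s_j)=1$ rather than by the cluster-head clause, which the paper's argument glosses over.
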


\begin{proof}
(i) $\alpha$-PCH is hereditary. Assume that $\alpha$-PCH$(s_i,s_j)=0$, and there is pair of subsets $s'_i \subseteq s_i$ and $s'_j \subseteq s_j$ with $\alpha$-PCH$(s'_i,s'_j)=1$. Since $\alpha$ is well conditioned, this is only possible if 
$((s'_i \cap s'_j) \cap val(\mathcal{C})) \neq \emptyset$. However, $(s'_i \cap s'_j) \subseteq (s'_i \cap s'_j)$ and by our assumption $((s_i \cap s_j) \cap val(\mathcal{C})) = \emptyset$, a contradiction.

(ii) If $\alpha$-PCH$(s_i,s_j)=1$, $|s_i \cap s_j| \geq 1$. 
Assume otherwise by contradiction.  
Since $\alpha$ is well-conditioned, this is only possible if the extra condition in $\alpha$-PCH returns \emph{$\alpha$-conflict} when $\{s_i \cap s_j \}= \emptyset$.
However, in that case there can not be an intersection with the set $val(\mathcal{C})$, and $\alpha$-PCH$(s_i,s_j)=0$ instead.
It remains to show that for any pair of subsets $s'_i \subseteq s_i$ and $s'_j \subseteq s_j$ with $\alpha$-PCH$(s'_i,s'_j)=0$, $(s_i \cap s_j) \backslash (s'_i \cap s'_j) \neq \emptyset$.
Assume otherwise by contradiction, but in that case again there cannot be an intersection with $val(\mathcal{C})$ and $\alpha$-PCH$(s_i,s_j)=0$.

(iii) Since $\alpha$ is well-conditioned, and it takes $O(r)$ time to verify the extra condition in $\alpha$-PCH, $\alpha$-PCH is verified in polynomial time. \qed
\end{proof}

The above two modifications guarantee that the \texttt{BST Algorithm} will find a $(k,\alpha())$-Set Packing (PCH) if $\mathcal{S}$ has at least one. Given that each set in $\mathcal{S}$ has size at most $r$, we can immediately discard any set in $\mathcal{C}$ of size more than $r$. In this way, each set in $\mathcal{C}$ is be upper-bounded by a constant $c$,  $1 \leq c \leq r-1$.
To maintain our running time, the size of $\mathcal{C}$ should be $O(g(k))$, where $g$ is a computable function dependent only on $k$ and possibly $r$ but independent of $n$. Hence, we can state:

\begin{theorem}
If $\alpha()$ is well-conditioned, the PCH-$r$-Set Packing with $\alpha()$-Overlap problem is solved in $O((g(k))^{k} (rk)^{(r-1)k} n^{r})$ time, where $|\mathcal{C}| = g(k)$.
\end{theorem}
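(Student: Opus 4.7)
The plan is to show correctness and then analyze the running time of the modified \texttt{BST-$\alpha()$-algorithm}, using the two preceding lemmas as the main building blocks. First, I would argue correctness: the modified \texttt{Initialization (PCH)} routine produces, for at least one child of the root, a partial-solution of any $(k,\alpha())$-set packing (PCH) of $\mathcal{S}$, by the lemma just established; and the $\alpha$-PCH function is well-conditioned, so the \texttt{Greedy} and \texttt{Branching} routines behave exactly as in Section~\ref{SetSection}. Consequently, Lemmas~\ref{terminationLemma} and~\ref{BranchingLemma} apply verbatim with $\alpha$ replaced by $\alpha$-PCH, and the algorithm finds a $(k,\alpha())$-set packing (PCH) whenever one exists.

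Next I would analyze the search tree. Since every set in $\mathcal{S}$ has at most $r$ elements, any $C_i \in \mathcal{C}$ of size larger than $r$ can be discarded in a preprocessing step; after this, $|\mathcal{C}| \leq g(k)$ still holds and $1 \leq |C_i| \leq r-1$ (as a set must also contain another element to form a valid $(k,\alpha)$-set packing). The number of children of the root is therefore bounded by $\binom{|\mathcal{C}|}{k} = O((g(k))^k)$. For every other node, \texttt{Branching} creates one child per element of $I^*$, and the argument from Section~\ref{SetSection} still yields $|I^*| \leq r(k-1) = O(rk)$.

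For the tree height, recall that at each internal level some set $s^i_j$ is augmented by exactly one new element. Since each set of $\mathbf{Q^i}$ is initialized with a $C_i \in \mathcal{C}$ of size at least $1$ and may grow to at most $r$ elements before the corresponding sponsor set from $\mathcal{S}$ is fully identified, the depth from any root-child to a leaf is at most $(r-1)k$. Multiplying the branching factors yields a tree of size $O((g(k))^{k}\,(rk)^{(r-1)k})$. At each node the dominant work is computing $\mathcal{S}(s^i_j,\mathbf{Q^i})$, performing pairwise $\alpha$-PCH tests, and evaluating the extra $val(\mathcal{C})$-disjointness check; since $\alpha()$ is well-conditioned (computable in polynomial time in $n$) and the $val(\mathcal{C})$-check is $O(r)$, the per-node cost is bounded by $O(n^r)$. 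Combining the three factors gives the claimed $O((g(k))^{k}(rk)^{(r-1)k} n^{r})$ bound.

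The main obstacle I anticipate is the height bound: one must be careful to argue that although the initial sets $C_i$ may have varying sizes, each \texttt{Branching} step strictly enlarges some $s^i_j$ by one element (guaranteed by Lemma~\ref{BranchingLemma} applied to $\alpha$-PCH), and no $s^i_j$ exceeds $r$ elements, so the total number of augmentations along any root-to-leaf path is at most $(r-1)k$. Verifying that $\alpha$-PCH preserves this bound even in the degenerate case where some $C_i$ already has size $r$ (handled by observing that such $C_i$ either already acts as a sponsor or is discarded) is the only subtlety.
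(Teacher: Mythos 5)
Your proposal is correct and follows essentially the same route as the paper: correctness via the two preceding lemmas (the PCH initialization lemma and the well-conditionedness of $\alpha$-PCH, which lets Lemmas~\ref{terminationLemma} and~\ref{BranchingLemma} carry over), and a tree-size count of $\binom{|\mathcal{C}|}{k}$ root children times branching factor $O(rk)$ over height $(r-1)k$. The paper's own justification is in fact terser than yours; your explicit treatment of the height bound and of cluster heads of size $r$ only fills in details the paper leaves implicit.
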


We could also omit the condition that clusters cannot share members of the cluster heads as in \cite{Dreier14}. In that case, we do not need to redefine the function $\alpha()$.

 
\section{Conclusion}
We have proposed a more general framework for the problem of finding overlapping communities where the pairwise overlap meets a constraint function $\alpha()$. This framework captures much more realistic settings of the community discovering problem and can lead to interesting questions on its own.  We have also shown that our problems are fixed-parameter tractable when the overlap constraint $\alpha()$ is subject to a set of rather general conditions (Definition \ref{alphaCondition}). In addition, we have given several $\alpha()$ functions that meet those conditions. 

There are several interesting paths remaining to explore. It would be interesting to provide a fixed-parameter algorithm for our problems for functions other than those as in Definition \ref{alphaCondition}. For example, when the overlap is bounded by a percentage of the sizes of the communities or when the overlap size has a lower-bound instead of an upper-bound. In addition, a natural step would be to obtain kernelization algorithms for our problems.


\bibliographystyle{splncs03}
\bibliography{biblio}

\begin{thebibliography}{10}
\providecommand{\url}[1]{\texttt{#1}}
\providecommand{\urlprefix}{URL }

\bibitem{Faisal10}
Abu-Khzam, F.N.: An improved kernelization algorithm for $r$-{S}et {P}acking.
  Information Processing Letters  110(16),  621--624 (2010)

\bibitem{Arora12}
Arora, S., Ge, R., Sachdeva, S., Schoenebeck, G.: Finding overlapping
  communities in social networks: Toward a rigorous approach. In: Proceedings
  of the 13th ACM Conference on Electronic Commerce. pp. 37--54. EC '12, ACM
  (2012)

\bibitem{Chen10}
Chen, D., Shang, M., Lv, Z., Fu, Y.: Detecting overlapping communities of
  weighted networks via a local algorithm. Physica A: Statistical Mechanics and
  its Applications  389(19),  4177 -- 4187 (2010)

\bibitem{Coscia11}
Coscia, M., Giannotti, F., Pedreschi, D.: A classification for community
  discovery methods in complex networks. Statistical Analysis and Data Mining
  4(5),  512--546 (2011), \url{http://dx.doi.org/10.1002/sam.10133}

\bibitem{Cui13}
Cui, W., Xiao, Y., Wang, H., Lu, Y., Wang, W.: Online search of overlapping
  communities. In: Proceedings of the 2013 ACM SIGMOD International Conference
  on Management of Data. pp. 277--288. SIGMOD '13, ACM (2013)

\bibitem{Dreier14}
Dreier, J., Kuinke, P., Przybylski, R., Reidl, F., Rossmanith, P., Sikdar, S.:
  Overlapping communities in social networks. CoRR  abs/1412.4973 (2014),
  \url{http://arxiv.org/abs/1412.4973}

\bibitem{Fellows04a}
Fellows, M., Heggernes, P., Rosamond, F., Sloper, C., Telle, J.A.: Finding $k$
  {D}isjoint {T}riangles in an {A}rbitrary {G}raph. In: Hromkovi\v{c}, J.,
  Nagl, M., Westfechtel, B. (eds.) The 30th Workshop on Graph-Theoretic
  Concepts in Computer Science. LNCS, vol. 3353, pp. 235--244. Springer,
  Heidelberg (2004)

\bibitem{ACMTOCT:FLOR}
Fernau, H., L\'{o}pez-Ortiz, A., Romero, J.: Using parametric transformations
  toward polynomial kernels for packing problems allowing overlaps. ACM
  Transactions on Computation Theory  7(3) (2015)

\bibitem{Fernau09}
Fernau, H., Raible, D.: A {P}arameterized {P}erspective on {P}acking {P}aths of
  {L}ength {T}wo. Journal of Combinatorial Optimization  18(4),  319--341
  (2009)

\bibitem{Fortunato10}
Fortunato, S.: Community detection in graphs. Physics Reports  486(3-5),
  75--174 (2010)

\bibitem{Gossen14}
Gossen, T., Kotzyba, M., Nürnberger, A.: Graph clusterings with overlaps:
  Adapted quality indices and a generation model. Neurocomputing  123,  13--22
  (2014)

\bibitem{Gregory07}
Gregory, S.: An algorithm to find overlapping community structure in networks.
  In: Proc. of the 11th European Conference on Principles and Practice of
  Knowledge Discovery in Databases. PKDD 2007, vol. 4702, pp. 91--102. Springer
  (2007)

\bibitem{JanMar2014}
Jansen, B.M.P., Marx, D.: Characterizing the easy-to-find subgraphs from the
  viewpoint of polynomial-time algorithms, kernels, and turing kernels. In:
  Proc. 26th ACM-SIAM Symposium on Discrete Algorithms. pp. 616--629. SIAM
  (2015)

\bibitem{Li13}
Li, J., Wang, X., Eustace, J.: Detecting overlapping communities by seed
  community in weighted complex networks. Physica A: Statistical Mechanics and
  its Applications  392(23),  6125 -- 6134 (2013)

\bibitem{BstAlgorithm14}
L\'opez-Ortiz, A., Romero, J.: Parameterized algorithms for the {$H$}-packing
  with $t$-overlap problem. J. of Graph Algorithms and Applications  18(4),
  515--538 (2014)

\bibitem{Miritello13}
Miritello, G.: Temporal Patterns of Communication in Social Networks. Springer
  Theses, Springer International Publishing, 1 edn. (2013)

\bibitem{Mishra07}
Mishra, N., Schreiber, R., Stanton, I., Tarjan, R.: Clustering social networks.
  In: Bonato, A., Chung, F. (eds.) Algorithms and Models for the Web-Graph,
  Lecture Notes in Computer Science, vol. 4863, pp. 56--67. Springer Berlin
  Heidelberg (2007)

\bibitem{Celebi15}
N’Cir, C.E., Cleuziou, G., Essoussi, N.: Overview of overlapping partitional
  clustering methods. In: Celebi, M.E. (ed.) Partitional Clustering Algorithms,
  pp. 245--275. Springer International Publishing (2015)

\bibitem{Prieto06}
Prieto, E., Sloper, C.: Looking at the stars. Theoretical Computer Science
  351(3),  437--445 (2006)

\bibitem{Robertson04}
Robertson, N., Seymour, P.: Graph minors. {XX}. {W}agner's conjecture. Journal
  of Combinatorial Theory, Series B  92(2),  325--327 (2004)

\bibitem{Tong14}
Tong, C., Xie, Z., Mo, X., Niu, J., Zhang, Y.: Detecting overlapping
  communities of weighted networks by central figure algorithm. In: Computing,
  Communications and IT Applications Conference (ComComAp), 2014 IEEE. pp.
  7--12 (Oct 2014)

\bibitem{Xie13}
Xie, J., Kelley, S., Szymanski, B.K.: Overlapping community detection in
  networks: The state-of-the-art and comparative study. ACM Comput. Surv.
  45(4),  1--35 (2013)

\bibitem{Yang14}
Yang, J., Leskovec, J.: Structure and overlaps of ground-truth communities in
  networks. ACM Trans. Intell. Syst. Technol.  5(2),  1--35 (2014)

\bibitem{Moustafa09}
Youssef, M.A., Youssef, A., Younis, M.F.: Overlapping multihop clustering for
  wireless sensor networks. IEEE Transactions on Parallel and Distributed
  Systems  20(12),  1844--1856 (2009)

\end{thebibliography}

\newpage
\section{Appendix}
\subsection{Pseudocode}

\begin{algorithm}
  \caption{\texttt{BST $\alpha()$-Algorithm}}
  \begin{algorithmic}[1]
    	\STATE{Compute a maximal $(\alpha())$-set packing $\mathcal{M}$}
			 
			\LINEIF{$|\mathcal{M}|\geq k$}{Return {$\mathcal{M}$}}
		
	   	\STATE{$T$=\texttt{Initialization}$(\mathcal{M})$}
	
       \FOR{each node $i$ of $T$}
			
           \STATE{Let $\mathbf{Q^i}$ be the collection of sets at node $i$}
					
			      \STATE{$\mathbf{Q^{gr}}=$\texttt{Greedy$(\mathbf{Q^i})$}}

						 \IF{$\mathbf{Q^{gr}}!=\infty$}
					  
						    \LINEIF{$|\mathbf{Q^{gr}}|=k$}{Return $\mathbf{Q^{gr}}$}

					      \STATE{\texttt{Branching}($T$,node $i$,$\mathbf{Q^i}$,$\mathbf{Q^{gr}}$)}
						\ENDIF
       \ENDFOR
  \end{algorithmic}\label{bstAlgorithm}
\end{algorithm}

\begin{algorithm}
  \caption{\texttt{Initialization}$(\mathcal{M})$}
  \begin{algorithmic}[1]
					 	\STATE{Replicate $k$ times each element $u \in val(\mathcal{M})$ and identify them as $u_1, \dots, u_k$.}
						\STATE{Let $\mathcal{M}_k$ be the enlarged set $val(\mathcal{M}$)}
						\STATE{$i=0$, $T=null$}
			      \WHILE{$i < |\binom{\mathcal{M}_k}{k}|$}
                 \STATE{Let $\mathbf{Q^i}=\{s^i_1,\dots,s^i_k\}$ be the $i$th combination of $\binom{\mathcal{M}_k}{k}$}
								 \STATE{\texttt{CreateNode($T$,root,node $i$,$\mathbf{Q^i}$)}}
								 \STATE{$i=i+1$}
						\ENDWHILE
						\STATE{Return $T$}
   \end{algorithmic}\label{IniRoutine}
\end{algorithm}

\begin{algorithm}\label{Greedy}
  \caption{\texttt{Greedy}($\mathbf{Q^i})$}
  \begin{algorithmic}[1]
  
	 \STATE{$\mathbf{Q^{gr}}=\infty$}
	 \STATE{//Check if $\mathbf{Q^i}$ could not be a partial solution}  
	 \IF{there is no pair $s^i_f$,$s^i_g$ in $\mathbf{Q^i}$ ($f \neq g$) with $\alpha(s^i_f,s^i_g)=1$}\label{Line_AlphaPartial_alpha}
	  \STATE{$\mathbf{Q^{gr}}=\emptyset$; $j=0$}   
    \REPEAT
     \STATE{Let $s^i_j$ be the $j$th set of $\mathbf{Q^i}$}

				\IF{$\mathcal{S}(s^i_j,\mathbf{Q^i},\alpha) = \emptyset$}
				      \STATE{$\mathbf{Q^{gr}}=\infty$}
				\ELSE

				\STATE{//Choose arbitrarily a set $S$ from $\mathcal{S}(s^i_j,\mathbf{Q^i},\alpha)$ such that}
				\STATE{//$S$ does not $\alpha$-conflict with any set in $\mathbf{Q^{gr}}$}
				\STATE{//and $S$ is not already in $\mathbf{Q^{gr}}$}
				
				\STATE{$f=0$}
				\WHILE{$f < |\mathcal{S}(s^i_j,\mathbf{Q^i},\alpha)|$}
				   \STATE{Let $S_f$ be the $f$-th set in $\mathcal{S}(s^i_j,\mathbf{Q^i},\alpha)$}
				   \STATE{$Conflicts = 0$}
				   \FOR{each $S' \in \mathbf{Q^{gr}}$}
					     \IF{$(\alpha(S_f,S')==1$) OR ($S_f==S'$)}
					        \STATE{$Conflicts = Conflicts + 1$}\label{Line_AlphaGr_alpha}  
								\ENDIF
					 \ENDFOR
					 \LINEIF{$Conflicts==0$}{$S=S_f$; $f =|\mathcal{S}(s^i_j,\mathbf{Q^i},\alpha)|+1$}
				\ENDWHILE
								
				\STATE{//Add the set $S$ to $\mathbf{Q^{gr}}$}
				\IF{such set $S$ does not exist}
				    \STATE{$j=k+1$}
				\ELSE
				 	  \STATE{$\mathbf{Q^{gr}}=\mathbf{Q^{gr}} \cup S$}
			  \ENDIF
				
	       \STATE{$j=j+1$}
		  \ENDIF
  \UNTIL{($j \geq k$) OR ($\mathbf{Q^{gr}=\infty}$)}
	\ENDIF
	
	\STATE{Return $\mathbf{Q^{gr}}$}
	
  \end{algorithmic}\label{GreedyRoutine}
\end{algorithm}

\begin{algorithm}
  \caption{\texttt{Branching}($T$,node $i$,$\mathbf{Q^i}$,$\mathbf{Q^{gr}}$)}
  \begin{algorithmic}[1]
     \STATE{Let $s^i_j$ be the first set of $\mathbf{Q^i}$ not completed by \textsc{Greedy}, i.e.,}
		 \STATE{$j=|\mathbf{Q^{gr}}|+1$ and $s^i_j=\mathbf{Q^i}[j]$}
		
     \STATE{$I^*=\emptyset$}
		 \FOR{each $S \in \mathcal{S}(s^i_j,\mathbf{Q^i},\alpha)$}\label{Line_ComputeIBegin_alpha}
	     \FOR{each $S' \in \mathbf{Q^{gr}}$}
			    \LINEIF{$\alpha(S,S')=1$ OR $(S==S')$}{$I^* = I^* \cup ((S \backslash s^i_j) \cap S')$}\label{Line_AlphaIstar_alpha}
				\ENDFOR
		 \ENDFOR\label{Line_ComputeIEnd_alpha}

     
     \STATE{$l=0$}
     
     \WHILE{$l \leq |I^*|$}
     
		     \STATE{Let $u_l$ be the $l$th element of $I^*$}
				
		     \STATE{$\mathbf{Q^l}=\{s^i_1,s^i_2,\dots,s^i_j \cup u_l, \dots, s^i_k\}$}\label{Line_BranchQL_alpha}
         
				 \STATE{\textsc{\textsc{CreateNode($T$,node $i$,node $l$,$\mathbf{Q^l}$)}}} 
         
         \STATE{$l=l+1$}
         
     \ENDWHILE  
  \end{algorithmic}\label{BranchRoutine}
\end{algorithm}

\begin{algorithm}
  \caption{\texttt{Compute} $\mathcal{S}(s^i_j,\mathbf{Q^i},\alpha$)}
  \begin{algorithmic}[1]
		\STATE{$l=0$, $\mathcal{S}(s^i_j,\mathbf{Q^i},\alpha) = \emptyset$}
			      \WHILE{$l < |\mathcal{S}(s^i_j)|$}
						     \STATE{Let $S_l$ be the $l$-th set in $\mathcal{S}(s^i_j)$}
								 \STATE{$f=0$, $conflicts= 0$}
								 \WHILE{$f  < |\mathbf{Q^i}|$}
								     \IF{$f \neq j$}
										      \IF{$\alpha(s^i_f,S_l)==1$ OR $(s^i_f==S_l)$}\label{Line_AlphaSponsors_alpha}
										         \STATE{$conflicts = conflicts + 1$}
													 \ENDIF
								     \ENDIF
										 \STATE{$f=f+1$}
								 \ENDWHILE
								 \LINEIF{$conflicts==0$}{$\mathcal{S}(s^i_j,\mathbf{Q^i},\alpha) = \mathcal{S}(s^i_j,\mathbf{Q^i},\alpha) \cup S_l$}
								 \STATE{$l=l+1$}
						\ENDWHILE
						\STATE{Return $\mathcal{S}(s^i_j,\mathbf{Q^i},\alpha)$}
   \end{algorithmic}\label{FeasibleSponsorsAlgorithm}
\end{algorithm}

\begin{algorithm}
  \caption{\texttt{Initialization (PCH)}$(\mathcal{C})$}
  \begin{algorithmic}[1]
						\STATE{$i=0$, $T=null$}
			      \WHILE{$i < |\binom{\mathcal{C}}{k}|$}
						     \STATE{Let $\{C^i_1,\dots,C^i_k\}$ be the $i$th combination of $\binom{\mathcal{C}}{k}$}
                 \STATE{Make $\mathbf{Q^i}=\{s^i_1,\dots,s^i_k\}$ equal to $\{C^i_1,\dots,C^i_k\}$, i.e, $s^i_j = C^i_j$}
								 \STATE{\texttt{CreateNode($T$,root,node $i$,$\mathbf{Q^i}$)}}
								 \STATE{$i=i+1$}
						\ENDWHILE
						\STATE{Return $T$}
   \end{algorithmic}\label{IniRoutinePCH}
\end{algorithm}

\end{document}